\date{}
\title{r-connectivity Augmentation in Trees}
\author{S.Dhanalakshmi, N.Sadagopan and Nitin Vivek Bharti} 
\institute{Indian Institute of Information Technology, Design and Manufacturing, Kancheepuram, Chennai, India. \\
\email{$\{mat12d001, sadagopan\}@iiitdm.ac.in$}}
\begin{document}
\maketitle
\begin{abstract}


A \emph{vertex separator} of a connected graph $G$ is a set of vertices removing which will result in two or more connected components and a \emph{minimum vertex separator} is a set which contains the minimum number of such vertices, i.e., the cardinality of this set is least among all possible vertex separator sets. The cardinality of the minimum vertex separator refers to the connectivity of the graph G. A connected graph is said to be $k-connected$ if removing exactly $k$ vertices, $ k\geq 1$, from the graph, will result in two or more connected components and on removing any $(k-1)$ vertices, the graph is still connected. A \emph{connectivity augmentation} set is a set of edges which when augmented to a $k$-connected graph $G$ will increase the connectivity of $G$ by $r$, $r \geq 1$, making the graph $(k+r)$-$connected$ and a \emph{minimum connectivity augmentation} set is such a set which contains a minimum number of edges required to increase the connectivity by $r$. In this paper, we shall investigate a $r$-$connectivity$ augmentation in trees, $r \geq 2$. As part of lower bound study, we show that any minimum $r$-connectivity augmentation set in trees requires at least $ \lceil\frac{1}{2} \sum\limits_{i=1}^{r-1} (r-i) \times l_{i} \rceil $ edges, where $l_i$ is the number of vertices with degree $i$. Further, we shall present an algorithm that will augment a minimum number of edges to make a tree $(k+r)$-connected.\\ \\

\noindent \textbf{Keywords:} Minimum vertex separator and connectivity augmentation.
\end{abstract}

\section{Introduction}
Connectivity augmentation is an optimization problem which finds applications in the study of computer networks, in particular, increasing the robustness of a network by adding a minimum number of links to the network \cite{weiner}. Connectivity augmentation problem asks for a minimum number of edges whose addition to a $k$-vertex connected graph results in a $(k+r)$-vertex connected graph, $k, r \geq 1$. The study of connectivity augmentation was initiated by Eswaran et al. \cite{eswaran} and it has attracted many researchers since then. Well known results in this area include biconnectivity augmentation of $1$-connected graphs \cite{eswaran,rosenthal,hsu,hsu1}, and tri-connectivity augmentation of biconnected graphs \cite{watanabe,hsu2}. A recent break-through result due to Vegh \cite{vegh}, shows that making a $k$-connected graph, a $(k+1)$-connected graph by augmenting a minimum number of edges is polynomial-time solvable.

Interestingly, parallel algorithm for biconnectivity augmentation has been looked at in the literature \cite{hsu,surabhi}. Further, connectivity augmentation in special graphs like chordal graphs has been studied in \cite{nsn}. Despite several attempts, $r$-connectivity augmentation in graphs for arbitrary is open. I.e., Given a $k$-connected graph $G$, find a minimum set of edges whose augmentation to $G$ makes it $(k+r)$-connected is open to the best of our knowledge.

In this paper, we investigate $r$-connectivity augmentation in trees. A natural approach is to use algorithm presented by Vegh \cite{vegh} iteratively $r$-times. However, this approach need not give optimum always. Therefore, this calls for a different structural understanding of graphs to determine a minimum connectivity augmentation set. Recently, in \cite{dhana}, Dhanalakshmi et al. initiated the study of $r$-connectivity augmentation in trees for the case $r=3$. In this paper, we look at this problem in a larger dimension and we have presented the lower bound for any $r$-vertex connectivity augmentation in trees. Also, we present an algorithm which takes a tree of order $n$ and an integer $r < n$ as an input and outputs the $r$-vertex connected graph by augmenting a minimum number of edges. We also believe that this result can be extended to $1$-connected graphs.\\

\noindent \textbf{Our Contribution:} We present the following results in this paper:
\begin{itemize}
\item Given a tree $T$, the number of edges to be augmented to convert a tree $T$ to a $r$-connected graph is at least $ \lceil\frac{1}{2} \sum\limits_{i=1}^{r-1} (r-i) \times l_{i} \rceil $, where $l_i$ denotes the number of vertices of degree $i$ in $T$.
\item A polynomial-time algorithm to determine the set of edges whose augmentation to the tree $T$ will make $T$, $r$-vertex connected graph meeting the above bound.
\end{itemize} 
\textbf{Organization of the paper:} We present the connectivity augmentation preliminaries in Section 2. The lower bound theory and an algorithm for $r$-connectivity augmentation is presented in Section 3. We conclude this paper with some directions for further research on $(k+r)$-connectivity augmentation in $k$-connected graphs.

\section{Preliminaries}
Notations used in this paper are as per \cite{nsn,dbwest}. Let \emph{G = (V,E)} represent an undirected connected graph where \emph{V(G)} denotes the non-empty set of vertices and $E(G)$ is an two element subset of $V(G)$. For any vertex ${v} \in V(G)$, $N_{G}(v) = \{u \mid \{u,v\} \in E(G) \}$ and $d_{G} = |N_{G}(v)|$ refers to the neighborhood and the degree of vertex $v$ in the graph $G$, respectively. ${ \delta(G) }$ and ${ \Delta(G) }$ are minimum and maximum degree of graph $G$, respectively. For simplicity, we use $\delta$ and $\Delta $ when the concerned graph is clear from the context.  A \emph{cycle} is a connected graph where $\delta = \Delta = 2$. A tree is a connected acyclic graph. A \emph{path} is a connected acyclic graph with $\Delta \leq 2$. For $S \subset V(G)$, $G[S]$ denotes the graph induced on set $S$ and $G\backslash S$ is the induced graph on the vertex set $V(G)\backslash S$. A vertex separator of a graph $G$ is a set $S \subseteq V(G)$ such that $G\backslash S$ has more than one connected component. A minimum vertex separator $S$ signifies a vertex separator of least size and the cardinality of such a set $S$ defines the vertex connectivity of the graph $G$, written as $\kappa(G)$. A graph is $k$-vertex connected if $\kappa(G)=k$. In particular, if $\kappa(G)=1$ then the graph is 1-connected i.e., the minimum vertex separator $S$ of $G$ is a singleton set, the vertex in $S$ is known as a \emph{cut-vertex} of $G$. For a graph \emph{G} with $\kappa (G)=k$, a minimum connectivity augmentation set $ E_{ca}  =  \{\{ u,v \}  |  u,v  \in V(G)\text{ and }\{ u,v \} \notin E(G)\}$ is such that the graph obtained from \emph{G} by augmenting $ E_{ca} $ edges is of vertex connectivity $k + r$, $r  \geq  2 $. 
\section{$r$-connectivity augmentation in trees}
In this section, we shall first present the lower bound analysis which gives the number of edges to be augmented to a tree in any minimum connectivity augmentation to make a tree $r$-vertex connected. Later, we give a sketch of the algorithm and then an algorithm with analysis which will output a connectivity augmentation set meeting the lower bound. Our approach finds a minimum $r$-connectivity augmentation set for paths and non-path trees, separately.

\begin{lemma}
\label{lb}
Let $T$ be a tree and $l_i$ denotes the number of vertices of degree $i$ in $T$. Then, any $r$-connectivity augmentation set $ E_{ca} $ is such that $| \: E_{ca} \: | \: \geq \:   \lceil \: \frac{1}{2} \sum\limits_{i=1}^{r-1} (r-i) \times l_{i} \: \rceil  $. 
\end{lemma}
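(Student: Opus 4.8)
The plan is to prove this lower bound by a degree-counting (handshake) argument, exploiting the fact that every vertex of an $r$-connected graph must have degree at least $r$. First I would record this necessary condition, which follows from the standard inequality $\kappa(G) \leq \delta(G)$: if some vertex $v$ had degree strictly less than $r$, then its neighbourhood $N_G(v)$ would be a vertex separator of size less than $r$ that isolates $v$ from the rest of the graph, contradicting $r$-connectivity. Hence, if the graph $G'$ obtained from $T$ by augmenting the edges of $E_{ca}$ is $r$-connected, then $\delta(G') \geq r$, i.e. every vertex of $G'$ has degree at least $r$.

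Next I would quantify the ``deficiency'' that the augmentation must repair. For a vertex $v$ with tree-degree $i = d_T(v)$, its deficiency is $\max\{0, r-i\}$, precisely the number of extra edges that must become incident to $v$ so that its final degree reaches $r$. Writing $a(v)$ for the number of edges of $E_{ca}$ incident to $v$, the condition $\delta(G') \geq r$ forces $d_T(v) + a(v) \geq r$, hence $a(v) \geq r - d_T(v)$ for every vertex with $d_T(v) < r$ and $a(v) \geq 0$ otherwise. Summing over all vertices and grouping them by their tree-degree $i$, and using that there are exactly $l_i$ vertices of degree $i$ while only degrees $1 \leq i \leq r-1$ contribute a positive deficiency, I obtain $\sum_v a(v) \geq \sum\limits_{i=1}^{r-1} (r-i)\, l_i$.

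Finally I would invoke the handshake lemma on the augmented edges alone: each edge $\{u,v\} \in E_{ca}$ contributes exactly $1$ to $a(u)$ and $1$ to $a(v)$, so $\sum_v a(v) = 2\,|E_{ca}|$. Combining this with the previous inequality yields $2\,|E_{ca}| \geq \sum\limits_{i=1}^{r-1} (r-i)\, l_i$, and since $|E_{ca}|$ is an integer I may round up to conclude $|E_{ca}| \geq \lceil \frac{1}{2} \sum\limits_{i=1}^{r-1} (r-i)\, l_i \rceil$, as required.

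The argument is short, and the step needing the most care is the accounting in the middle paragraph: one must check that $\sum_v a(v) \geq \sum_{i=1}^{r-1}(r-i)\, l_i$ remains valid even when some augmented edges are incident to vertices that already have tree-degree at least $r$. Such edges merely add to $\sum_v a(v)$ without being demanded by any deficiency constraint, so they only increase the left-hand side and the direction of the inequality is preserved. I expect no genuine obstacle here; the only point worth stating explicitly is that this reasoning applies to \emph{every} augmentation set, and therefore gives a valid lower bound on the size of a minimum one.
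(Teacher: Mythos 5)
Your proposal is correct and follows essentially the same route as the paper's own proof: the necessary condition $\delta(G') \geq r$, the per-vertex deficiency count giving $\sum_{i=1}^{r-1}(r-i)\,l_i$, and the factor of two from each edge covering two endpoints. Your write-up is in fact more careful than the paper's --- you make the handshake accounting and the integrality step explicit --- but there is no substantive difference in approach.
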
 
\begin{proof}
It is well-known that for any $r$-connected graph $ G, \: \delta(G) \geq r $. Therefore, to make \emph{T} a $r$-connected graph, we must increase the degree of each vertex with degree $i < r$ by at least $ r-i $. i.e., the sum of degrees to be increased is $ \lceil \: \sum\limits_{i=1}^{r-1} (r-i) \times l_{i} \: \rceil  $. Since, an edge joins a pair of vertices, any augmentation set $ E_{ca} $ has at least  $ \lceil \: \frac{1}{2} \sum\limits_{i=1}^{r-1} (r-i) \times l_{i} \: \rceil  $ edges. This proves the lemma. $\hfill \qed$
\end{proof}

\subsection{$r$-connectivity augmentation for paths}
We first present an algorithm for connectivity augmentation to convert a path, a tree with $\Delta \leq 2$, to a $r$-connected graph. \\

\noindent \textbf{Outline of the Algorithm:} The path is converted into a cycle by augmenting an edge between the two degree one vertices. If the required connectivity $r$ is odd, then we convert the 2-connected cycle into a 3-connected graph by augmenting edges in such a way that every edge creates a cycle of length $ \lfloor \frac{n}{2}  \rfloor \, +\, 1$ \cite{dhana}. After that, from every vertex $v_i$ of degree less than $r$ an edge is augmented from vertex $\: v_i \:$ to $ \: v_{i+j} \: $ where $j$ varies from 2 to $ \lfloor \frac{r}{2} \rfloor $ iteratively. This approach guarantees that the algorithm augments exactly $\lceil \: \frac{1}{2} \sum\limits_{i=1}^{r-1} (r-i) \times l_{i} \: \rceil$ edges and the resultant graph is $r$-connected.

\begin{algorithm}[H]
\caption{\tt $r$-connectivity augmentation of a Tree}
\begin{algorithmic}[1]
\STATE{\textbf{Input.} Tree $T$}
\STATE{\textbf{Output.} $r$-vertex connected graph $H$.}
\IF{$T$ is a path}
        \STATE{$Path\:Augmentation(T)$}
    \ELSE
    \STATE{ $Non-path\:Augmentation(T)$ }    
    \ENDIF
\STATE{Output $H$}
\end{algorithmic}
\label{alg:tree}
\end{algorithm}

\begin{algorithm}[H]
\caption{\tt $r$-connectivity Augmentation in path like trees: $Path \: Augmentation\,(Tree\:T)$}

\begin{algorithmic}[1]
\STATE{Let $P_n\:=\:(v_1,\,v_2,\,\ldots,\,v_n)  $ denotes an ordering of vertices of $T$ such that for all $ 1\: \leq i \leq n-1, \: v_i $ is adjacent to $ v_{i+1} $. }

\STATE{ Augment the edge $\{ v_1,\,v_n \}$ to $T$ and update $E_{ca}$. /* Converts a path into a cycle. */ }
\IF{$r$ is odd}
    \FOR{$i\,=\,1$ to $ \lceil \frac{n}{2} \rceil $}
            \STATE{Augment the edge $ \{ v_i,\, v_{\lfloor \frac{n}{2} + i \rfloor} \} $ to $T$ and update $E_{ca}$. /* Every augmented edge will create a $C_{\lceil \frac{n}{2} + i \rceil }$ */}
    \ENDFOR
\ENDIF
\FOR{$ j\,=\,2 $ to $ \lfloor \frac{r}{2} \rfloor $}
\STATE{ /* Each iteration will increase the connectivity of the graph by 2 */ }
\FOR{$ i\,=\,1$ to $ n $ }
    \IF{$ (i+j) \leq\, n$ }
    \STATE{Augment the edge $ \{ v_i,\, v_{i+j} \}$ to $T$ and
     update $ E_{ca} $}
    \ELSE
    \STATE{Augment the edge $ \{ v_i,\, v_{((i+j)\mod n)} \}$ to $T$ and
     update $ E_{ca} $}    
    \ENDIF
     \ENDFOR
\ENDFOR
\end{algorithmic}
\label{alg:path}
\end{algorithm}

\begin{lemma}
\label{lowerboundrpath}
The algorithm \emph{Path Augmentation( )} augments exactly $\lceil \: \frac{1}{2} \sum\limits_{i=1}^{r-1} (r-i) \times l_{i} \: \rceil$ edges.
\end{lemma}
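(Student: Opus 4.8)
The plan is to specialize the general lower bound of Lemma~\ref{lb} to a path and then count, line by line, the edges that \emph{Path Augmentation( )} inserts, showing that the two quantities agree exactly. First I would record the degree sequence of a path $P_n$: the two end vertices $v_1, v_n$ have degree $1$ and the remaining $n-2$ internal vertices have degree $2$, so $l_1 = 2$, $l_2 = n-2$, and $l_i = 0$ for $i \ge 3$. Substituting into the bound of Lemma~\ref{lb} collapses the summation to $\lceil (r-1) + \tfrac{(r-2)(n-2)}{2}\rceil$, which I would simplify separately according to the parity of $r$. Writing $r = 2m$ yields the integer $1 + (m-1)n$ (so the ceiling is vacuous), while writing $r = 2m+1$ yields $1 + mn - \lfloor n/2\rfloor$ after resolving the ceiling.

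Next I would count the augmented edges. Line~2 contributes exactly one edge, the chord $\{v_1, v_n\}$ that closes the cycle. The final double loop (lines~9--20) runs the jump parameter $j$ from $2$ to $\lfloor r/2\rfloor$, and for each fixed $j$ the inner loop over $i = 1, \dots, n$ inserts the ``distance-$j$'' chords $\{v_i, v_{(i+j)\bmod n}\}$. The crucial combinatorial observation, which I would establish first, is that for any fixed $j$ with $2 \le j \le \lfloor r/2\rfloor < \tfrac{n}{2}$ (the strict inequality $\lfloor r/2\rfloor < n/2$ following from the hypothesis $r < n$), the $n$ pairs produced are pairwise distinct and disjoint from the edges of $P_n$ and the line~2 chord; indeed a coincidence $\{v_i, v_{i+j}\} = \{v_{i'}, v_{i'+j}\}$ with $i \ne i'$ forces $2j \equiv 0 \pmod n$, which is impossible for $0 < j < n/2$, and every such chord has jump $\ge 2$ whereas $P_n$ consists only of jump-$1$ edges. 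Hence each $j$ contributes exactly $n$ fresh edges and the loop contributes $(\lfloor r/2\rfloor - 1)n$ in total. For even $r$ the odd-$r$ block is skipped, so the algorithm inserts $1 + (\lfloor r/2\rfloor - 1)n = 1 + (m-1)n$ edges, matching the simplified bound.

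For odd $r$ I would additionally account for lines~4--6, which augment the ``diameter'' chords $\{v_i, v_{\lfloor n/2\rfloor + i}\}$ for $i = 1, \dots, \lceil n/2\rceil$. The step I expect to be the real obstacle is verifying that exactly $\lceil n/2\rceil$ distinct chords arise here with no duplication and no overlap with the distance-$j$ chords already counted. When $n$ is even these are the $n/2$ diameters, each appearing once precisely because the loop halts at $\lceil n/2\rceil = n/2$ rather than traversing the full cycle (running $i$ up to $n$ would list every diameter twice); when $n$ is odd they are $\lceil n/2\rceil$ distinct chords of jump $(n-1)/2$. Since $\lfloor r/2\rfloor < n/2$, none of these diameter chords equals a distance-$j$ chord from the last loop. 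Summing, the algorithm inserts $1 + \lceil n/2\rceil + (\lfloor r/2\rfloor - 1)n$ edges; applying the identity $\lceil n/2\rceil = n - \lfloor n/2\rfloor$ rewrites this as $1 + mn - \lfloor n/2\rfloor$, exactly the simplified bound for odd $r$. Combining the two parities establishes that the algorithm augments precisely $\lceil \tfrac{1}{2}\sum_{i=1}^{r-1}(r-i)\,l_i\rceil$ edges, which proves the lemma. The bookkeeping over the four parity sub-cases (even/odd $r$ against even/odd $n$), together with the no-duplication argument for the diameter chords, is where the bulk of the care is needed.
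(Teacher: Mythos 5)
Your proposal is correct and follows essentially the same route as the paper's proof: count the edges contributed by Step~2, by the odd-$r$ block, and by the final double loop, then reconcile the totals with the lower-bound formula via a case split on the parities of $r$ and $n$ (your substitution $l_1=2$, $l_2=n-2$ is just the explicit form of the paper's use of $n=l_1+l_2$). The one genuine addition in your write-up is the verification that the inserted chords are pairwise distinct and disjoint from existing edges, a point the paper's counting silently assumes.
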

\begin{proof}
Our claim is to prove that the algorithm augments exactly $\lceil \: \frac{1}{2} (r-1) \times l_{1} +  (r-2) \times l_{2}\: \rceil$ edges in the given path $P_n$. We shall prove this by varying $r$ and $n$ into odd and even. Our algorithm augments $\frac{l_1}{2}$ edges in \emph{Step 2}. Therefore, now there are $l_1+l_2$ degree two vertices, which is precisely the number of vertices in $G$. 
\begin{description}
\item[Case 1:] $r$ and $n$ are odd.\\
The \emph{Steps 3-7} augments $\frac{n+1}{2}$ edges. Since, the graph has only degree two vertices and the number of degree two vertices is $l_1+l_2$, $\frac{n+1}{2} = \frac{l_1+l_2+1}{2}$. In \emph{Steps 8-17}, $n$ edges are augmented for each $j \in \{2,\ldots, \lfloor \frac{r}{2} \rfloor \}$. Since $r$ is odd, the \emph{Steps 8-17} augments $\left( \frac{r-1}{2}-1\right)\cdot n$ edges, i.e., $\left( \frac{r-1}{2}-1\right)\cdot (l_1+l_2)$ edges. In total, the number of edges augmented by the algorithm is,
\begin{eqnarray}
\nonumber
& & \frac{l_1}{2} + \frac{l_1+l_2+1}{2} + \frac{(r-1)(l_1+l_2)}{2} - (l_1+l_2) \\ \nonumber
&=& \frac{l_1}{2} + \frac{l_1+l_2+1}{2} + \frac{(r-1)l_1}{2} + \frac{(r-1)l_2}{2} - (l_1+l_2) \\ \nonumber
&=& \frac{1}{2}\left[ l_1(r-1)+l_2(r-2)+1\right] \\ \nonumber
&=& \lceil \frac{1}{2}\{l_1(r-1)+l_2(r-2)\}  \rceil \text{ (Since, $r$ and $l_2$ are odd)}. 
\end{eqnarray}

\item[Case 2:] $r$ is odd and $n$ is even.\\
The \emph{Steps 3-7} augments $\frac{n}{2}$ edges i.e., $\frac{l_1+l_2}{2}$ edges. Similar to \emph{Case 1}, the \emph{Steps 8-17} augments $\left( \frac{r-1}{2}-1\right)\cdot n$ edges i.e., $\left( \frac{r-1}{2}-1\right)\cdot (l_1+l_2)$ edges. In total, the number of edges augmented by the algorithm is,
\begin{eqnarray}
\nonumber
& & \frac{l_1}{2} + \frac{l_1+l_2}{2} + \frac{(r-1)(l_1+l_2)}{2} - (l_1+l_2) \\ \nonumber
&=& \frac{l_1}{2} + \frac{l_1+l_2}{2} + \frac{(r-1)l_1}{2} + \frac{(r-1)l_2}{2} - (l_1+l_2)\\ \nonumber
&=& \frac{1}{2}\left[ l_1(r-1)+l_2(r-2)\right] \\ \nonumber
&=& \lceil \frac{1}{2}\{l_1(r-1)+l_2(r-2)\}  \rceil \text{ (Since, $r$ is odd and $l_2$ is even)}
\end{eqnarray}

\item[Case 3:] $r$ is even and $n$ is either odd or even.\\
Since $r$ is even, the \emph{Steps 8-17} augments $\left( \frac{r}{2}-1\right)\cdot n$ edges i.e., $\left( \frac{r}{2}-1\right)\cdot (l_1+l_2)$ edges. In total, the number of edges augmented by the algorithm is,
\begin{eqnarray}
\nonumber
& & \frac{l_1}{2}  + \frac{(r-2)(l_1+l_2}{2}\\ \nonumber
&=& \frac{l_1}{2} + \frac{(r-2)l_1}{2} + \frac{(r-2)l_2}{2}\\ \nonumber
&=& \frac{1}{2}\left[ l_1(r-1)+l_2(r-2)\right] \\ \nonumber
&=& \lceil \frac{1}{2}\{l_1(r-1)+l_2(r-2)\}  \rceil \text{ (Since, $r$ is even)}
\end{eqnarray}
Hence, the lemma is proved. $\hfill \qed$
\end{description}
\end{proof}

\begin{lemma}
\label{exactrpath}
 Let $P_n$ be a path on $n \geq 4$ vertices. Algorithm $\mathtt{Path~~ Augmentation()}$ yields a graph $G$, where $\forall~ v \in V(G), deg_G(v) \geq r$. Further, there exist at least one vertex $u \in V(G)$ such that $deg_G(v) = r$.
\end{lemma}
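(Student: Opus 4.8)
The plan is to track how the degree of an arbitrary vertex evolves as the algorithm augments edges, exploiting the fact that after \emph{Step 2} the working graph is the cycle $C_n$ and every subsequently augmented edge is a \emph{chord} of this cycle, naturally classified by its length along the cycle. Writing indices modulo $n$, I would group the augmented chords into layers: the cycle itself is the length-$1$ layer, each pass $j$ of \emph{Steps 8--17} is the length-$j$ layer, and \emph{Steps 3--7} form a (possibly partial) near-antipodal layer of length $\lfloor n/2\rfloor$. I would then argue that every full length-$j$ layer raises the degree of each vertex by exactly $2$, since $v_i$ acquires precisely the two chords $\{v_i,v_{i+j}\}$ and $\{v_{i-j},v_i\}$, and that distinct layers contribute distinct edges. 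The case analysis then splits on the parity of $r$, and for odd $r$ further on the parity of $n$, mirroring the structure of Lemma~\ref{lowerboundrpath}.

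For the layer bookkeeping I would first observe that after \emph{Step 2} the graph is a simple cycle in which $\deg(v)=2$ for all $v$, which is legitimate because $n\geq 4$. Next, for each $j$ in $\{2,\ldots,\lfloor r/2\rfloor\}$ the inner loop of \emph{Steps 8--17} inserts every chord of length $j$; since the input satisfies $r<n$ we have $\lfloor r/2\rfloor<n/2$, hence $2j\neq n$, so the $n$ chords of a fixed length $j$ are pairwise distinct, and chords of different lengths are trivially distinct. Consequently each such pass increases every degree by exactly $2$, and after all of \emph{Steps 8--17} every vertex has degree $2+2\bigl(\lfloor r/2\rfloor-1\bigr)$.

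When $r$ is even, $\lfloor r/2\rfloor=r/2$ and no near-antipodal layer is added, so every vertex already has degree $2+2(r/2-1)=r$, and both assertions follow at once. When $r$ is odd, $\lfloor r/2\rfloor=(r-1)/2$ gives every vertex degree $r-1$ before \emph{Steps 3--7}, so it remains to analyse the near-antipodal layer. For $n$ even, the edges $\{v_i,v_{i+n/2}\}$, $i=1,\ldots,n/2$, form a perfect matching of the cycle, raising each degree by exactly $1$ to $r$. For $n$ odd, the $\lceil n/2\rceil=(n+1)/2$ edges $\{v_i,v_{\lfloor n/2+i\rfloor}\}$ cover $v_1,\ldots,v_{(n-1)/2}$ once (as left endpoints) and $v_{(n+3)/2},\ldots,v_n$ once (as right endpoints), while the central vertex $v_{(n+1)/2}$ occurs as both a left and a right endpoint and so receives degree increment $2$. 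Thus every vertex reaches degree $\geq r$, with all vertices other than $v_{(n+1)/2}$ attaining exactly $r$, which establishes both claims.

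The arithmetic in each case is routine, so I expect the only genuine obstacle to be the odd-$r$, odd-$n$ subcase: one must pin down the effect of the floor in the index $\lfloor n/2+i\rfloor$ and verify that this partial layer covers every vertex at least once while leaving at least one vertex (here any $v\neq v_{(n+1)/2}$) at degree exactly $r$. A companion concern, needed for the ``exactly $2$ per layer'' step throughout, is ruling out edge coincidences between the near-antipodal layer and the circulant layers; this is precisely where the hypothesis $r<n$ is used, since it forces the antipodal length $\lfloor n/2\rfloor$ to strictly exceed every circulant length $j\leq\lfloor r/2\rfloor$.
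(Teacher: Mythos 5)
Your proposal is correct and follows essentially the same route as the paper: count the degree contribution of the initial cycle, of the near-antipodal layer (for odd $r$), and of each pass of \emph{Steps 8--17}, then sum. The only difference is one of rigor — you explicitly verify that the chords within and across layers are pairwise distinct (using $r<n$ and $n\geq 4$) and work out the odd-$n$ antipodal layer by hand, whereas the paper asserts the $+2$ per pass directly and cites \cite{dhana} for the antipodal step; your version is the more careful one.
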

\begin{proof}
 The algorithm, first converts the path $P_n$ into a cycle $C_n$ by adding an edge between $v_1$ and $v_n$ in \emph{Step 2}. Now, the degree of each vertex in the resultant graph is two. If $r$ is odd, at the end of \emph{Steps 3-7}, the degree of each vertex is at least three \cite{dhana} and each iteration in \emph{Steps 8-17} increases the degree of all the vertices by two. The number of iterations in \emph{Steps 8-17} is $\frac{r-1}{2}-1$ and thus, the degree contribution to each vertex in $G$ in the \emph{Steps 8-17} is $\left(\frac{r-1}{2}-1\right) \times 2$. Thus, in the resulting graph, the degree of every vertex is at least $3 + \left(\frac{r-1}{2}-1\right) \times 2 = r$. If $r$ is even, each iteration in \emph{Steps 8-17} increases the degree of all the vertices by two and the number of iterations in \emph{Steps 8-17} is $\frac{r-1}{2}-1$. Thus, in the resulting graph, the degree of every vertex is $2 + \left(\frac{r}{2}-1\right) \times 2 = r$. Clearly, $deg_G(v_1) = r$. Hence, the lemma.
$\hfill \qed$
\end{proof}
\begin{definition}[Harary Graphs, $H_{r,n}$, $r<n$ \cite{dbwest}]
\label{defn:harary}
Place $n$ vertices around a circle equally spaced. If $r$ is even, form $H_{r,n}$ by making each vertex adjacent to the nearest $\frac{r}{2}$ vertices in each direction around the circle. If $r$ is odd and $n$ is even, form $H_{r,n}$ by making each vertex adjacent to the nearest $\frac{r-1}{2}$ vertices in each direction and diametrically opposite vertex. If $r$ and $n$ are odd, index the vertices by the integers $1$ to $n$. Construct $H_{r,n}$ from $H_{r-1,n}$ by adding edges $\{i,i+\frac{(n-1)}{2}\}$ for $1\leq i \leq \frac{(n+1)}{2}$.
\end{definition}

\begin{lemma}
\label{lemma:hararygraphs}
The graph $H$ obtained from the algorithm Path Augmentation() is a Harary graph.
\end{lemma}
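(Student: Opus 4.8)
The plan is to show that the edge set produced by \texttt{Path Augmentation()} is \emph{identical} to the edge set of the Harary graph $H_{r,n}$ of Definition~\ref{defn:harary}, by tracking the edges contributed by each phase of the algorithm and matching them phase-by-phase against the three cases of the Harary construction. Accordingly I would split the argument on the parity of $r$ (and, when $r$ is odd, on the parity of $n$), exactly mirroring the case distinction in Definition~\ref{defn:harary}.

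First I would analyse the cycle-plus-$j$-loop portion common to all cases. \emph{Step 2} makes each $v_i$ adjacent to $v_{i-1}$ and $v_{i+1}$ (indices mod $n$), i.e.\ to the nearest one vertex in each direction, which is precisely $H_{2,n}$. The inner loop of \emph{Steps 8--17}, for each $j$ from $2$ to $\lfloor r/2\rfloor$, adds every edge $\{v_i, v_{(i+j)\bmod n}\}$; together with the cycle edges (the $j=1$ layer) this makes each $v_i$ adjacent to exactly $v_{i\pm 1}, v_{i\pm 2},\ldots, v_{i\pm\lfloor r/2\rfloor}$, i.e.\ to the nearest $\lfloor r/2\rfloor$ vertices in each direction. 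When $r$ is even this gives $\lfloor r/2\rfloor = r/2$ neighbours in each direction and \emph{Steps 3--7} are skipped, so the output is exactly $H_{r,n}$ for even $r$. When $r$ is odd this yields the nearest $(r-1)/2$ in each direction, i.e.\ precisely $H_{r-1,n}$, matching the base graph used by the Harary construction for odd $r$.

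It then remains to show that \emph{Steps 3--7}, executed only when $r$ is odd, add exactly the diametric edges prescribed by the definition, and here the key is the floor computation for the target index $\lfloor n/2 + i\rfloor$. If $n$ is even, $\lfloor n/2 + i\rfloor = n/2 + i$, so the loop over $i = 1,\ldots,\lceil n/2\rceil = n/2$ adds exactly the $n/2$ diametrically opposite edges $\{v_i, v_{i+n/2}\}$, which is the odd-$r$/even-$n$ case of the definition. If $n$ is odd, then $\lfloor n/2 + i\rfloor = (n-1)/2 + i$ and $i$ ranges over $1,\ldots,\lceil n/2\rceil = (n+1)/2$, producing exactly the edges $\{i, i+(n-1)/2\}$ for $1\le i\le (n+1)/2$ that the definition uses to build $H_{r,n}$ from $H_{r-1,n}$. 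In each case the algorithm's edges coincide with $H_{r,n}$, which proves the lemma.

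The bookkeeping with the floor and ceiling indices, and verifying that the diametric loop enumerates each required edge exactly once (neither omitting a pair nor inserting a duplicate, particularly at the wrap-around boundary), is the step that needs the most care; it is where an off-by-one error would hide. The degree and edge-count facts already established in Lemmas~\ref{exactrpath} and~\ref{lowerboundrpath} serve as a useful consistency check that the matched edge sets have the correct cardinality and minimum degree.
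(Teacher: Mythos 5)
Your proposal is correct and follows essentially the same route as the paper's proof: both match the algorithm's phases against the three parity cases of Definition~\ref{defn:harary}, identifying \emph{Step 2} with the cycle layer, \emph{Steps 8--17} with the nearest-$j$ layers, and \emph{Steps 3--7} with the diametric edges for odd $r$. Your version is slightly more explicit about the floor/ceiling index arithmetic, and the only piece the paper includes that you omit is the trivial boundary case $r = n-1$, which it dispatches first.
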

\begin{proof}
If $r = n - 1$, then the algorithm augments edges between all the non-adjacent pair of vertices in $H$ and hence, $H$ is $r$-connected. Assume that $r < n - 1$. Since, the path $P_n$ is converted to a cycle in \emph{Step 2}, every vertex in $H$ is adjacent to a nearest vertex in each direction around the circle.
\begin{description}
\item[\textbf{Case 1:}] $r$ is even\\
 Every iteration $j\geq 2$ in the \emph{Steps 8-17}, makes a vertex $v_i$, $1\leq i \leq n$, in the graph $H$ adjacent to the nearest non-adjacent vertex in each direction around the circle. The iteration terminates when $j=\frac{r}{2}$ and hence, the number of iterations are $\frac{r}{2}-1$. Therefore, each vertex in $H$ is adjacent to the nearest $\frac{r}{2}$ vertices in each direction around the circle.
\item[\textbf{Case 2:}] $r$ is odd and $n$ is even\\
  Every iteration $j\geq 2$ in the \emph{Steps 8-17}, makes a vertex $v_i$, $1\leq i \leq n$, in the graph $H$ adjacent to the nearest non-adjacent vertex in each direction around the circle. The iteration terminates when $j=\frac{r-1}{2}$ and hence, the number of iterations are $\frac{r-1}{2}-1$. Therefore, each vertex in $H$ is adjacent to the nearest $\frac{r-1}{2}$ vertices in each direction and the iteration in \emph{Steps 3-7}, makes each vertex in $H$ adjacent to the diametrically opposite vertex.
\item[\textbf{Case 3:}] $r$ and $n$ are odd\\
The \emph{Steps 8-17}, constructs the $H_{k-1,n}$. Further, in \emph{Steps 3-7}, for every vertex $v_i$, $1 \leq i \leq \lceil \frac{n}{2} \rceil$, we augment an edge between $v_i$ and $v_{\lfloor \frac{n}{2}+i \rfloor} = v_{i+\frac{(n-1)}{2}}$.
\end{description}
From all the above cases and from the Definition \ref{defn:harary}, we conclude that the resultant graph $H$ is a Harary graph.
$\hfill \qed$
\end{proof}

\begin{lemma}
\label{connectivityhararygraphs}
Let $r<n$. Then $\kappa (H_{r,n}) = r$ \cite{harary}. 
\end{lemma}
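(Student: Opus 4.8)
The plan is to prove the two inequalities $\kappa(H_{r,n}) \le r$ and $\kappa(H_{r,n}) \ge r$ separately, working throughout with the circular layout of Definition~\ref{defn:harary} and handling the three construction cases ($r$ even; $r$ odd with $n$ even; $r$ and $n$ odd) in parallel. First I would dispose of the easy upper bound. Reading off the construction, every vertex has degree exactly $r$ when $r$ is even; degree $(r-1)+1 = r$ when $r$ is odd and $n$ is even; and degree $r$ for all but one vertex (which has degree $r+1$) when $r$ and $n$ are both odd. Hence $\delta(H_{r,n}) = r$ in every case, and since $\kappa(G) \le \delta(G)$ for every graph (deleting the neighbourhood of a minimum-degree vertex isolates it), we obtain $\kappa(H_{r,n}) \le r$ at once.

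The substance is the lower bound $\kappa(H_{r,n}) \ge r$: I must show that no set $S$ with $|S| < r$ separates the graph. The main tool is that adjacency in $H_{r,n}$ is local on the circle. Labelling the vertices $0,\ldots,n-1$ cyclically, after deleting $S$ the surviving vertices fall into maximal arcs of circularly consecutive vertices, and between two cyclically consecutive arcs sits a block of deleted vertices. For the even graph $H_{2m,n}$ (so $r = 2m$) the key observation is that the closest pair of surviving vertices straddling a block $B$ lies at circular distance $|B|+1$, so the two neighbouring arcs are joined by an edge if and only if $|B| \le m-1$. Consequently the arcs, viewed cyclically, fail to be all connected only when at least two separating blocks have size $\ge m$, which forces $|S| \ge 2m = r$ and contradicts $|S| < r$. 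This settles the case $r$ even.

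For odd $r = 2m+1$ I would reduce to the even graph, writing $H_{r,n}$ as $H_{2m,n}$ together with the (near-)diametric matching of Definition~\ref{defn:harary}. If $|S| \le 2m-1$ then $H_{2m,n}-S$ is already connected by the previous paragraph, hence so is $H_{r,n}-S$. The only remaining possibility is $|S| = 2m$, in which case $H_{2m,n}-S$ can split only into exactly two arcs separated by two blocks of size exactly $m$. Here I would invoke the extra matching and claim that some (near-)diameter edge has one endpoint in each arc, thereby reconnecting them; this yields $|S| \ge 2m+1 = r$, and combining with the upper bound gives $\kappa(H_{r,n}) = r$.

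I expect this reconnection claim to be the main obstacle of the whole argument. It requires a geometric case analysis of how the antipodal involution $p \mapsto p + n/2$ (for $n$ even) or the near-antipodal map $p \mapsto p + (n-1)/2$ (for $n$ odd) carries the arc/block partition onto itself: one must verify, for every admissible pair of arc lengths summing to $n-2m$, that the image of one arc meets the other arc in a surviving vertex, so that the matching edge actually bridges the two components rather than landing inside $S$ or inside the same arc. The even part of the proof is clean counting, whereas this antipodal bookkeeping is the delicate step, and it must be checked in both parities of $n$. Since the statement already attributes the result to \cite{harary}, this plan essentially reconstructs Harary's classical argument.
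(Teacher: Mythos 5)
The paper does not actually prove this lemma: it is stated as a quoted classical fact with a citation to Chartrand--Harary, so there is no internal argument to compare yours against. Your outline is the standard textbook proof (essentially the one in West's book) and its skeleton is correct: the upper bound $\kappa\le\delta=r$ is immediate from the degree count in all three construction cases, and the arc/block analysis for $H_{2m,n}$ is exactly right --- disconnecting the cycle of surviving arcs requires at least two blocks of deleted vertices of size at least $m$, hence at least $2m$ deletions. The one step you leave open, the ``reconnection claim'' for odd $r=2m+1$, does in fact go through, and for $n$ even it is a one-line arc-containment argument rather than a heavy case analysis: if $|S|=2m$ splits the survivors into two nonempty arcs $A_1,A_2$ separated by two blocks of size exactly $m$, then the antipodal image of $A_1$ is an arc of length $|A_1|$, and it can miss $A_2$ only by lying inside the complementary arc $B_2\cup A_1\cup B_1$ of length $|A_1|+2m$; a short computation shows this forces $n/2\le m$, contradicting $r<n$. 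So some diametral edge bridges the two arcs. The case with $n$ odd does require the extra care you mention, since only the vertices $1,\dots,\frac{n+1}{2}$ receive a near-diametral edge in Definition~\ref{defn:harary}, but the same containment argument adapts. In short: your plan is a faithful reconstruction of the cited classical proof and is sound; to make it a complete proof you only need to write out the arc-intersection computation you have correctly identified as the crux.
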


\begin{lemma}
\label{pathrconnected}
For a path of length $n$, the graph obtained from the algorithm Path Augmentation() $H$ is $r$-connected.
\end{lemma}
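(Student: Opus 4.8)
The plan is to reduce the statement almost entirely to the two structural lemmas already established, since the real work has been done there. The key idea is that Lemma~\ref{lemma:hararygraphs} identifies the output graph $H$ of \emph{Path Augmentation()} as a Harary graph $H_{r,n}$, while Lemma~\ref{connectivityhararygraphs} records that $\kappa(H_{r,n}) = r$ whenever $r<n$. Composing these two facts yields $\kappa(H) = r$, which by the paper's convention is precisely what it means for $H$ to be $r$-connected. Thus the backbone of the argument is a one-line chaining of prior results.

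Concretely, I would first invoke Lemma~\ref{lemma:hararygraphs} to write $H = H_{r,n}$, taking care to note that this identification was carried out across all three parity regimes in that lemma ($r$ even; $r$ odd with $n$ even; and $r,n$ both odd). Assuming $r<n$, I would then apply Lemma~\ref{connectivityhararygraphs} directly to conclude $\kappa(H)=r$. I would also dispose of the boundary case $r=n-1$ separately, exactly as in Lemma~\ref{lemma:hararygraphs}: there the algorithm inserts every missing edge, so $H=K_n$, which is $(n-1)$-connected; since $r=n-1<n$ this is consistent and needs no fresh argument.

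If a self-contained proof were preferred, I would instead establish the two inequalities independently. The upper bound $\kappa(H)\le r$ is immediate from Lemma~\ref{exactrpath}, which furnishes a vertex of degree exactly $r$ while every vertex has degree at least $r$; hence $\delta(H)=r$ and $\kappa(H)\le\delta(H)=r$. The lower bound $\kappa(H)\ge r$ --- that no set of fewer than $r$ vertices disconnects $H$ --- is the genuinely nontrivial direction and is exactly the classical Harary connectivity theorem; one would prove it with a Menger-style argument, exhibiting $r$ internally disjoint paths between any two vertices by exploiting the circular ``nearest $\lfloor r/2\rfloor$ in each direction'' adjacency pattern, with the diametrical/extra edges handling the odd cases.

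I expect the main obstacle to be bookkeeping rather than mathematics: verifying that the hypotheses of the cited lemmas are actually met (in particular $r<n$, so that Lemma~\ref{connectivityhararygraphs} applies) and that the parity-based case split of Lemma~\ref{lemma:hararygraphs} genuinely covers the algorithm's output for every admissible pair $(n,r)$. The only point that would demand real effort is the lower bound $\kappa(H)\ge r$ were one to decline to cite the Harary result; but since Lemma~\ref{connectivityhararygraphs} makes it available, that obstacle dissolves and the lemma follows directly.
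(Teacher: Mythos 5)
Your proposal is correct and follows exactly the paper's own argument: identify $H$ as a Harary graph via Lemma~\ref{lemma:hararygraphs} and then conclude $\kappa(H)=r$ from Lemma~\ref{connectivityhararygraphs}. The additional remarks on the $r=n-1$ case and the self-contained alternative are fine but not needed; the core chaining is the same as in the paper.
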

\begin{proof}
It is clear from \emph{Lemma \ref{lemma:hararygraphs}} that the graph $H$ is a Harary graph. From \emph{Lemma \ref{connectivityhararygraphs}}, it therefore follows that $H$ is $r$-connected. Hence, the lemma.
$\hfill \qed$
\end{proof}

\clearpage

\subsection{Trace of Algorithm \ref{alg:path}}
\begin{figure}[H]
\centering
\includegraphics[scale=0.43]{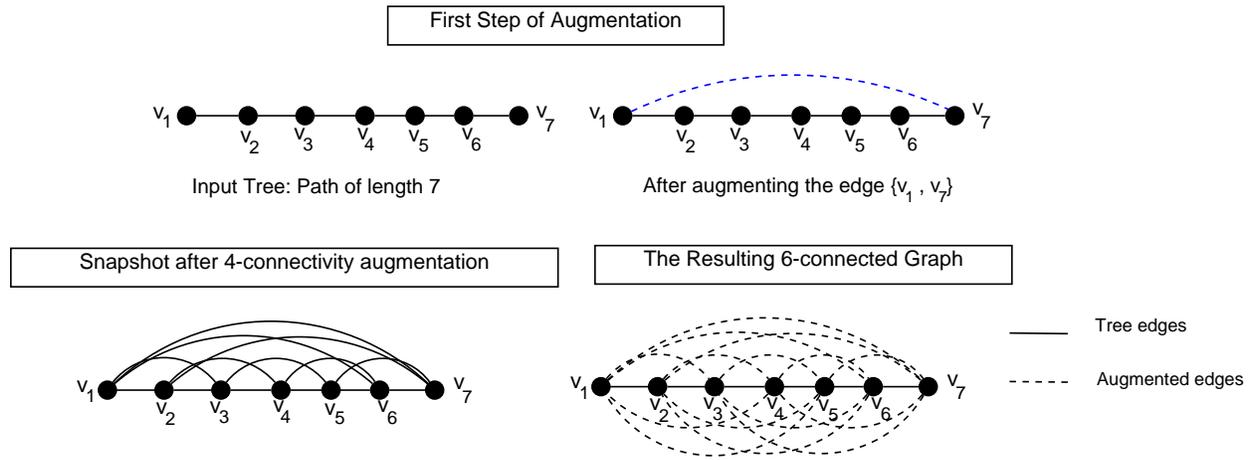}
\caption{Trace of Algorithm \ref{alg:path} when $n\,=\,7$ and $r\,=\,6$ }
\label{fig::tracepath2}
\end{figure}

\begin{figure}[H]
\centering
\includegraphics[scale=0.43]{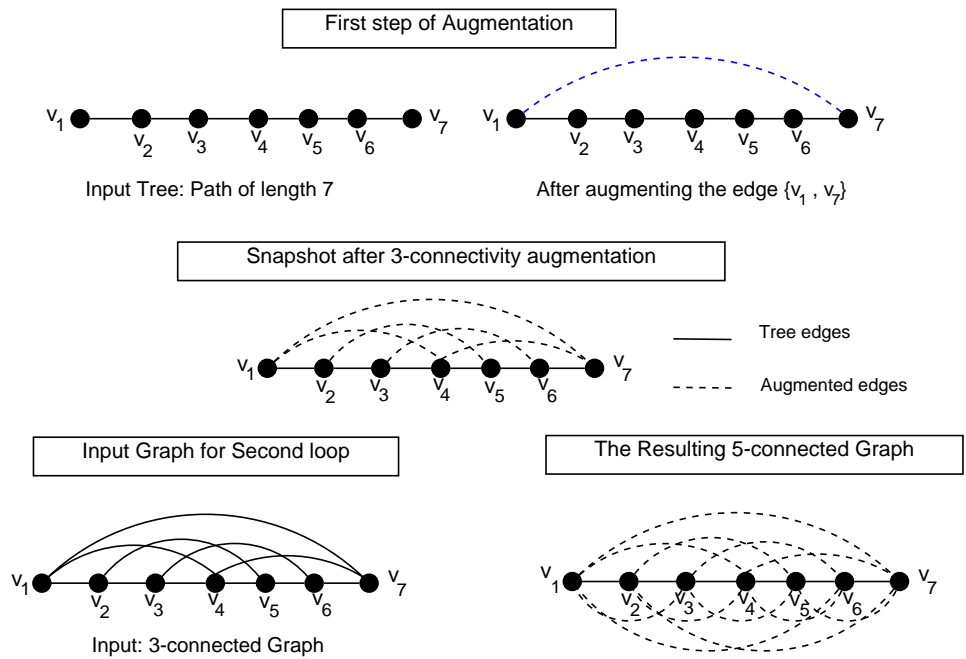}
\caption{Trace of Algorithm \ref{alg:path} when $n\,=\,7$ and $r\,=\,5$}
\label{fig::tracepath2}
\end{figure}


\subsection{$r$-connectivity augmentation for non-path trees}
In this subsection, we present an algorithm that finds a minimum set of edges whose augmentation to a tree makes it $r$-connected. 

\begin{definition}
\label{defn::c2}
Given a tree $T$, we define the $(C,2)$-\textbf{block tree} $T'$ as follows. Let $L(T) = \{l_1, l_2, \ldots, l_p\}$ be the set of leaves in $T$. Let $V(H) = V(T)$ and $E(H) = E(T) \cup \{\{l_i,l_{i+1}\} \vert 1 \leq i \leq p-1\}$. For each $1 \leq i \leq p-1$, the edge $\{l_i,l_{i+1}\}$ creates a fundamental cycle $C'$ in $H$. Let $C'$ and $C''$ be the fundamental cycles created due to $\{l_i,l_{i+1}\}, \{l_{i+1},l_{i+2}\}$, respectively. We refer to $C'$ and $C''$ as adjacent fundamental cycles. The $(C,2)$-\emph{block tree} $T'$ has the vertex set $V(T') = \{x \mid label(x) \text{ corresponds to an induced cycle of } H$ $\text{or a vertex separator of size 2 in } H$ $\text{or a vertex of degree 2 in } H\}$. For simplicity, we use the following notation. $V(T')$ consists of three kinds of vertices called $\sigma$ vertices, $\pi$ vertices and $\alpha$ vertices. We create a $\sigma$ vertex for a vertex separator of size two that separates adjacent fundamental cycles, a $\pi$ vertex for each fundamental cycle, a $\alpha$ vertex for each vertex of degree $2$. Note that, we do not create $\sigma$ vertex for each vertex separator of $H$, $\sigma$ vertex is created for a vertex separator $\{x,y\}$ such that either $\{x,y\}$ is fully contained in both $C$ and $C'$ or $x$ is contained in $C$ and $y$ is contained in $C'$, where $C$ and $C'$ are adjacent fundamental cycles. If $\alpha$ vertex is contained in both $C$ and $C'$, then it is adjacent to the $\pi$ vertex corresponding to $C$ and it is not adjacent to the $\pi$ vertex corresponding to $C'$. The adjacency between the pair of vertices in $V(T')$ is defined as follows: for $u,v \in V(T')$, $\{u,v\} \in E(T')$, if one of the following is true
\begin{itemize}
\item $u \in \sigma$ and $v \in \pi$ and $label(u) \subset label(v)$
\item $u \in \sigma$ and $v \in \pi$ and $label (u) = \{w,z\}$ such that $w \in label(v)$ and $z \in label (v')$ where $v$ and $v'$ are adjacent fundamental cycles in the graph $H$.
\item $u \in \pi$ and $v \in \alpha$ and $label(v) \subset label(u)$.
\end{itemize}
An illustration is given in \emph{Figure \ref{fig:c2block}}. 
\end{definition}

\begin{figure}[h]
\centering
\includegraphics[scale=0.3]{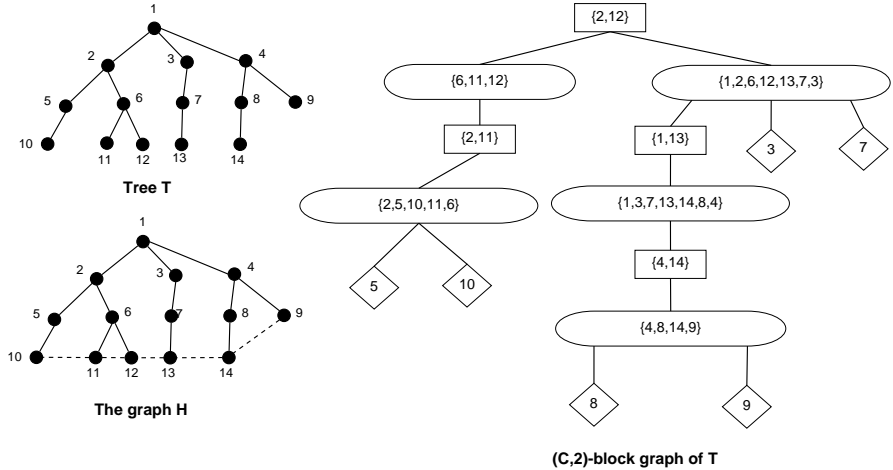}
\caption{A tree and the associated $(C,2)$-block tree $T'$}
\label{fig:c2block}
\end{figure}

\begin{definition}
Given a $(C,i)$-block tree $H$ of $G$, $i \geq 2$, we define $(C,i+1)$-\textbf{block tree} $T$ as follows; the structure of $T$ remains the same and the only change is the set of $\alpha$ vertices. $\alpha$ vertices in $(C,i+1)$-block tree are vertices of degree $i+1$. Note that $\alpha$ vertices in $(C,i)$ becomes vertices of degree $i+1$ after edge augmentation. The other vertices of degree $i+1$ in $C$ are also included as $\alpha$ vertices in $(C,i+1)$ (refer \emph{Figure \ref{fig::tracenonpath2}}, a $(C,3)$-block tree). The \textbf{degree} of a $\pi$ vertex is the number of $\alpha$ vertices adjacent to it in $T$ and it is denoted by $deg(\pi)$.
\end{definition}

\begin{figure}[h]
\centering
\includegraphics[scale=0.23]{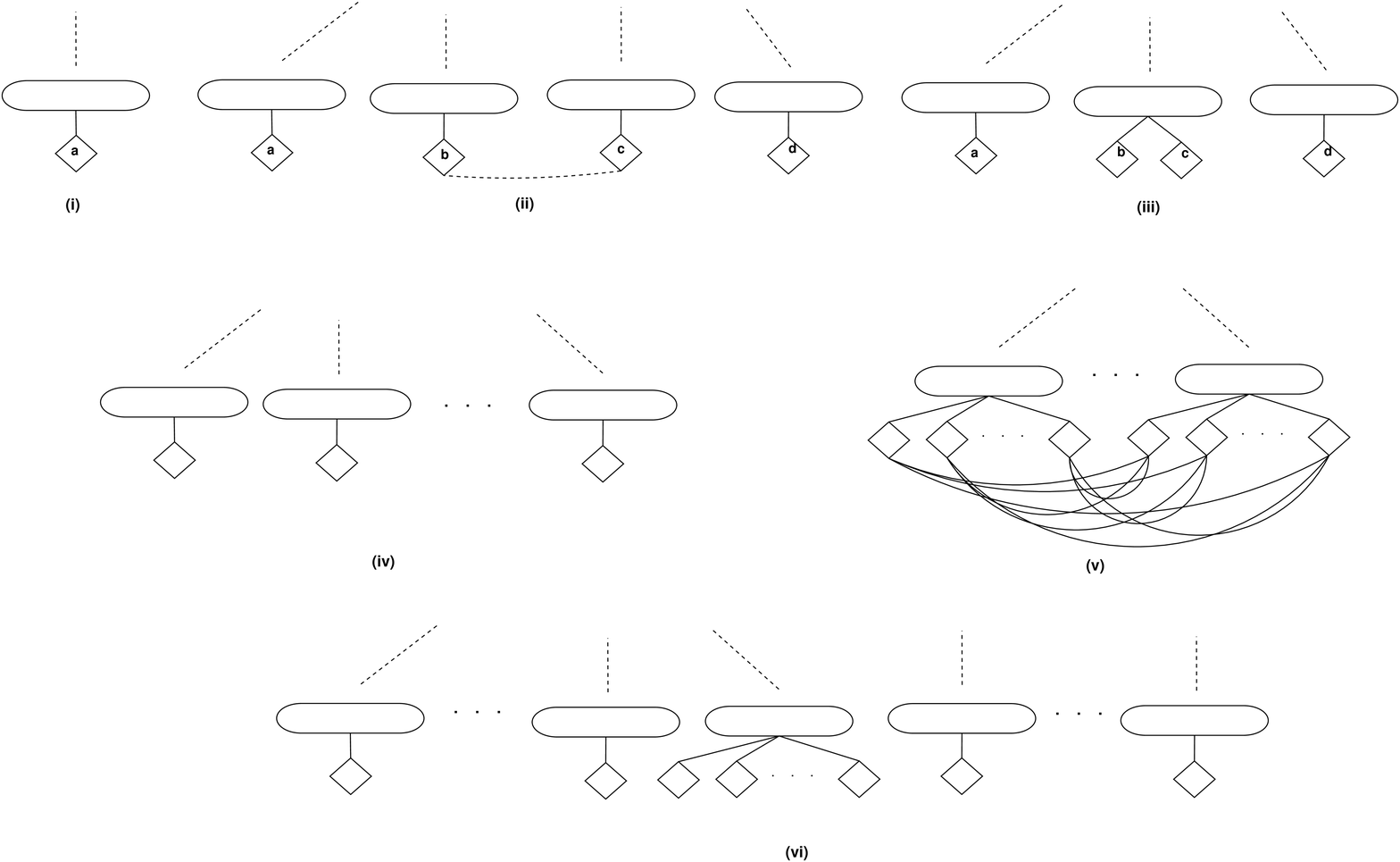}
\caption{(i) Boundary Case 1, (ii) Boundary Case 2, (iii) Boundary Case 3, (iv) Boundary Case 4, and (v) Boundary Case 5}
\label{fig::bc}
\end{figure}
%
%
%
%
%
%

\noindent \textbf{Outline of the algorithm:} Given a tree $T$, construct $(C,2)$-block tree as per \emph{Definition \ref{defn::c2}}, and we consider only the $\pi$ vertices have degree at least one. We augment edges as follows: identify two $\pi$ vertices $\pi_{max}$ and $\pi_{smax}$ such that the corresponding degree in $T'$ is maximum and second maximum, add the edge $\{x,y\}$, $x$ is a $\alpha$ vertex adjacent to $\pi_{max}$ vertex and $y$ is a $\alpha$ vertex adjacent to $\pi_{smax}$ vertex. Remove $x$ and $y$ from $T$ to get a tree again for the next iteration. This process is stopped when we encounter one of the following boundary cases for which augmentation is done separately.
\begin{description}
\item[\textbf{Boundary Case 1:}] $T'$ has exactly one $\alpha$ vertex (\emph{see Figure \ref{fig::bc} (i)}), say $a$, and let the corresponding $\pi$ vertex be $\pi_i$. In this case, augment the edge $\{a,c\}$, where $c \in label(\pi_j)$, $i\neq j$, and $deg_{T'}(c)=deg_{T'}(a)+1$.

\item[\textbf{Boundary Case 2:}] $T'$ has exactly four $\pi$ vertices, say $\pi_1, \pi_2, \pi_3, \pi_4$, and each $\pi$ vertex has exactly one $\alpha$ vertex in it, say $a, b, c$ and $d$, respectively. Also, the edge $\{b,c\}$ exist (\emph{see Figure \ref{fig::bc} (ii)}). In the case where $\{b,c\}$, augment the edges $\{a,c\}$ and $\{b,d\}$. 

\item[\textbf{Boundary Case 3:}]  $T'$ has exactly three $\pi$ vertices with degree sequence $(1,2,1)$ as illustrated in \emph{Figure \ref{fig::bc} (iii)}. In this case, augment the edges $\{a,c\}$ and $\{b,d\}$. If either the edge $\{a,c\}$ or the edge $\{b,d\}$ exist, then augment the edges $\{a,b\}$ and $\{c,d\}$.

\item[\textbf{Boundary Case 4:}] The degree of each $\pi$ vertex in $T'$ is one. For each $1 \leq i \leq s$, augment an edge between a $\alpha$ vertex of $\pi_i$ and a $\alpha$ vertex of $\pi_j$ such that $\vert i-j \vert$ is maximum (\emph{see Figure \ref{fig::bc} (iv)}). After all such augmentation is done, the resultant graph may belong to either boundary case 1 or boundary case 2, in which case augmentation is done as per \emph{boundary Cases 1} and $2$.

\item[\textbf{Boundary Case 5:}] The degree of all $\pi$ vertices is at least one and each $\alpha$ vertex of $\pi_i$, $1\leq i \leq s$, is adjacent to all $\alpha$ vertices of $\pi_j$, $1\leq j \leq s$ and $i \neq j$ (\emph{see Figure \ref{fig::bc} (v)}). Augment edges between two non-adjacent $\alpha$ vertices of the same $\pi$ vertex. After augmentation remove $x$ and $y$ to get a tree for the next iteration. After all such augmentation is done, the resultant graph may belong to boundary case 1, in which case augmentation is done as per \emph{boundary Case 1}.

\item[\textbf{Boundary Case 6:}] The degree of $\pi_1, \ldots, \pi_{l-1}, \pi_{l+1}, \ldots, \pi_s$ vertices are exactly one and $deg(\pi_l) > 1$, $1 \leq l \leq s$ (\emph{see Figure \ref{fig::bc} (vi)}). Augment the edges between a $\alpha$ vertex adjacent to $\pi_l$ and a $\alpha$ vertex adjacent to $\pi_i$ vertex, $i \in \{1,\ldots,l-1,l+1,\ldots,s\}$, such that $\vert i-l \vert$ is maximum. Note that, whenever we augment an edge between two $\alpha$ vertices $x$ and $y$, we remove $x$ and $y$ from $\alpha$ vertices. After all such augmentation is done, the resultant graph may belong to boundary case 1 or boundary case 2 or boundary case 3, in which case augmentation is done as per \emph{boundary Cases 1, 2 and 3}.
\end{description}

At the end of the first iteration, the given tree $T$ is made 3-connected with the help of the associated $(C,2)$-block tree $T'$. We next construct $(C,3)$-block tree as per the definition and use it to get a 4-connected graph. We repeat this process till augmentation for $(C,r-1)$-block tree is done. This completes the algorithm.

\begin{algorithm}[H]
\caption{\tt $r$-connectivity Augmentation in non-path trees: $Non-Path \: Augmentation\,(Tree\:T)$}
\begin{algorithmic}[1]
\STATE{\textbf{Input:} A tree $T$ with $n$ vertices and an integer $r < n$.}
\STATE{\textbf{Output:} $r$-connected graph $H$ of $T$.}
\STATE{Let $x_1,\ldots, x_p$ be the leaves of $T$. Thus, $E_{ca} = \{\{x_i,x_{i+1}\} \vert 1 \leq i \leq p-1 \}$.}
\FOR{$j=2$ to $r-1$}
\STATE{ Compute the $(C,j)$-block tree $T'$ and let the number of $\pi$ vertices in $T'$ be $s$.}
\WHILE{$T'$ does not belongs to any boundary case $k$, $1 \leq k \leq 6$}
\STATE{Let $\pi_{max}$ and $\pi_{smax}$ are the two $\pi$ vertices in $T'$ such that $deg(\pi_{max}) \geq deg(\pi_{smax}) \geq deg(\pi_i), 1 \leq i \leq p-1$ and $i \neq max \neq smax$.}
\STATE{Add the edge $\{x,y\}$, $x$ is a $\alpha$ vertex adjacent to $\pi_{max}$, $y$ is a $\alpha$ vertex adjacent to $\pi_{smax}$. Remove the $\alpha$ vertices $x$ and $y$ from $T'$. /* \tt{This step yields a tree for the next iteration} */}
\STATE{Add $\{x,y\}$ to $E_{ca}$}
\STATE{Update $deg(\pi_{max})$ and $deg(\pi_{smax})$.}
\ENDWHILE
\IF{$T'$ belongs to boundary case $k$, $1 \leq k \leq 6$}
\STATE{Augment edges between $\alpha$ vertices as mentioned.}
\ENDIF
\ENDFOR
\STATE{Return $E_{ca}$ and the resultant graph $H$}
\end{algorithmic}
\label{alg:nonpath}
\end{algorithm}

\begin{lemma}
\label{lowerboundnon-path}
The algorithm \emph{Non-Path Augmentation( )} augments exactly $\lceil \: \frac{1}{2} \sum\limits_{i=1}^{r-1} (r-i) \times l_{i} \: \rceil$ edges.
\end{lemma}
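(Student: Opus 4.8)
The plan is to force equality by sandwiching: Lemma~\ref{lb} already gives $|E_{ca}| \ge \lceil \frac{1}{2}\sum_{i=1}^{r-1}(r-i)l_i\rceil$, so it suffices to prove the matching upper bound that \emph{Non-Path Augmentation( )} augments at most this many edges. Rather than sum edges iteration-by-iteration as in the path analysis of Lemma~\ref{lowerboundrpath}, I would count the \emph{total degree increase}, which equals $2\,|E_{ca}|$ because every augmented edge has exactly two endpoints. The entire lemma then reduces to pinning down the final degree of each vertex.

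The first step is to establish the structural invariant driving the count: for each $j \in \{2,\ldots,r-1\}$, at the start of the $j$-th iteration the current graph has minimum degree at least $j$, and the $\alpha$ vertices of the $(C,j)$-block tree $T'$ are precisely the vertices of degree exactly $j$. Since we only ever add edges, each vertex's degree is non-decreasing; hence a vertex $v$ with $\deg_T(v)\ge r$ has degree $\ge r > j$ throughout, is never of degree exactly $j\le r-1$, is never an $\alpha$ vertex, and is therefore never touched. Granting the invariant, every edge added in the main while loop (Step~8) joins two $\alpha$ vertices, i.e. two vertices of degree exactly $j$, and raises each to degree $j+1$; so in the non-boundary part of iteration $j$ no vertex of degree $\ne j$ is affected and every edge contributes its full two units to the degree of deficient vertices. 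The leaf edges of Step~3 are absorbed into the same bookkeeping, as they only raise the degrees of the degree-one vertices that subsequently enter the $\alpha$ set.

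From this I would extract the crucial \emph{Claim~A}: in the output graph $H$, every vertex $v$ with $\deg_T(v)\ge r$ has $\deg_H(v)=\deg_T(v)$, and every vertex with $\deg_T(v)<r$ has $\deg_H(v)=r$, with at most one exceptional vertex of degree $r+1$, this exception occurring exactly when $\sum_{i=1}^{r-1}(r-i)l_i$ is odd. Claim~A yields the count directly: the total degree increase is $\sum_{v:\deg_T(v)<r}\bigl(\deg_H(v)-\deg_T(v)\bigr) = \sum_{i=1}^{r-1}(r-i)l_i$ plus a correction of $0$ or $1$. Because this total must be even (it equals $2\,|E_{ca}|$), the correction is $0$ when $\sum_{i=1}^{r-1}(r-i)l_i$ is even and $1$ when it is odd; in either case $|E_{ca}|=\lceil\frac{1}{2}\sum_{i=1}^{r-1}(r-i)l_i\rceil$, which is exactly the ceiling in the statement and explains why the single overshoot to degree $r+1$ accounts precisely for the rounding.

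The hard part will be verifying Claim~A through the boundary cases. The main while loop is clean because it always pairs two degree-$j$ $\alpha$ vertices, so nothing is wasted and nothing overshoots there. But when $T'$ falls into one of Boundary Cases~1--6 the algorithm augments a fixed pattern of edges, occasionally to a vertex $c$ whose degree is one larger (as in Boundary Case~1), and I would have to check each case individually to confirm that (a) no vertex is raised above degree $r$ except the single parity vertex, which can only arise in the last iteration $j=r-1$ where a degree-$(r-1)$ vertex is joined to a degree-$r$ vertex, and (b) no edge is ever added between two vertices that are both already at degree $r$, since such an edge would squander a unit of augmentation and break the count. Establishing the opening invariant likewise presupposes that each iteration leaves the graph $(j+1)$-connected with the stated degree profile, which is the content of the correctness lemmas for the non-path algorithm; I would invoke those, the analogue of Lemma~\ref{exactrpath}, rather than reprove connectivity here.
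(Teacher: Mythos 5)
Your proposal takes essentially the same route as the paper's proof: both reduce the edge count to a degree-counting argument in which every vertex of degree $i<r$ is raised to degree exactly $r$, so that $2\,|E_{ca}|$ equals the total degree deficiency up to the parity correction absorbed by the ceiling. If anything, your version is more careful than the paper's two-sentence sketch, since you make explicit the invariant that the $\alpha$ vertices at iteration $j$ are exactly the degree-$j$ vertices, the single degree-$(r+1)$ overshoot that accounts for the ceiling, and the boundary-case verification that both you and the paper leave unchecked.
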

\begin{proof}
Algorithm \ref{alg:nonpath} augments $p-1$ edges to $T$ during the computation of $(C,2)$-block tree. Later, at every iteration, the algorithm augments exactly the half the number of degree $i$, $1 \leq i \leq r-1$, vertices in the resultant graph of the previous iteration. i.e., degree $i$ vertices are converted into degree $r$ vertices in $T$, $1 \leq i \leq r-1$, by augmenting $ \lceil \: \frac{1}{2} \sum\limits_{i=1}^{r-1} (r-i) \times l_{i} \: \rceil  $ edges. This proves the lemma. $\hfill \qed$
\end{proof}

\begin{lemma}
\label{exactrnonpath}
 Let $T$ be a non-path tree on $n \geq 4$ vertices. Algorithm $\mathtt{Path~~ Augmentation()}$ yields a graph $G$, where $\forall~ v \in V(G), deg_G(v) \geq r$. Further, there exist at least one vertex $u \in V(G)$ such that $deg_G(v) = r$.
\end{lemma}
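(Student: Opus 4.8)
The plan is to prove, by induction on the outer loop index $j$ (which runs from $2$ to $r-1$ in Algorithm \ref{alg:nonpath}), the stronger invariant that once the augmentation associated with the $(C,j)$-block tree has been completed, the current graph $G_j$ has minimum degree exactly $j+1$. Both assertions of the lemma then follow at once by setting $j=r-1$: since $\delta(G_{r-1})=r$, every vertex has degree at least $r$, and any vertex attaining the minimum has degree exactly $r$, giving the required $u$. (I read the statement's reference to \emph{Path Augmentation()} as a typo for \emph{Non-Path Augmentation()}.)

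For the base case I would examine the graph $G_0$ produced after Step 3 augments the $p-1$ leaf edges $\{x_i,x_{i+1}\}$. Each interior leaf $x_2,\ldots,x_{p-1}$ acquires two incident edges and hence degree $3$, each of $x_1,x_p$ acquires one and hence degree $2$, while every internal vertex retains its original degree $\geq 2$; thus $\delta(G_0)=2$, which is precisely the invariant on entry to the loop at $j=2$.

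For the inductive step I would invoke the structural fact built into the definition of the $(C,j)$-block tree and recorded in Lemma \ref{lowerboundnon-path}, namely that the $\alpha$ vertices of the $(C,j)$-block tree are exactly the vertices of degree $j$ in the current graph. Assuming $\delta=j$ at the start of iteration $j$, I would argue that every edge augmented during this iteration joins two $\alpha$ vertices (in the while loop and in Boundary Cases 2--6) or joins the last remaining $\alpha$ vertex to a vertex of degree $j+1$ (Boundary Case 1), and that each $\alpha$ vertex is deleted from $T'$ the moment it is used (Step 8), so no $\alpha$ vertex ever receives more than one new edge. Combining this with the claim that the boundary cases are exhaustive and jointly cover every $\alpha$ vertex still present when the while loop halts, each degree-$j$ vertex gains exactly one edge and becomes a degree-$(j+1)$ vertex, whereas every other vertex already had degree $\geq j+1$ and can only increase. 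Because $\delta=j$ guarantees at least one $\alpha$ vertex, this minimum value $j+1$ is actually attained, so $\delta(G_j)=j+1$ exactly and the induction closes.

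The main obstacle I anticipate lies in the coverage claim inside the inductive step: one must check that Boundary Cases 1--6 really are exhaustive, i.e.\ that every configuration in which the while loop can terminate falls into one of them, and that in each such case the prescribed edges meet every surviving $\alpha$ vertex exactly once, with any excess degree deposited only on vertices that already had degree $j+1$ (as happens to the vertex $c$ in Boundary Case 1). The hypotheses that $T$ is a non-path tree and that $n\geq 4$ are needed precisely here, to exclude the degenerate terminal configurations that would otherwise strand an $\alpha$ vertex or force a matched $\alpha$ vertex to overshoot; verifying each case against \emph{Figure \ref{fig::bc}} is the technical heart of the proof.
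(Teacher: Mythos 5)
Your proposal is correct in outline, but it is worth knowing that the paper's own ``proof'' of this lemma is the single sentence ``The proof trivially follows from the algorithm'' --- it supplies no argument at all. So you have not merely taken a different route; you have supplied the route the paper omits. Your induction on the outer loop index $j$, with the invariant $\delta(G_j)=j+1$, is exactly the right skeleton: the base case computation after the leaf-chain augmentation is correct ($x_1$ and $x_p$ end at degree $2$, interior leaves at degree $3$, internal tree vertices stay at degree $\geq 2$), and the inductive step correctly isolates the two facts that matter, namely that the $\alpha$ vertices of the $(C,j)$-block tree are precisely the degree-$j$ vertices of the current graph, and that the while loop together with the boundary cases gives each such vertex exactly one new edge (with the single overshoot in Boundary Case~1 landing on a vertex that already has degree $j+1$). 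Your reading of ``Path Augmentation()'' as a typo for ``Non-Path Augmentation()'' is also the only sensible one given the label and hypothesis. The one caveat is the coverage claim you yourself flag: that the six boundary cases are exhaustive and that each surviving $\alpha$ vertex is matched exactly once is asserted nowhere in the paper and is genuinely nontrivial (e.g.\ the parity bookkeeping in Boundary Case~4 when $s$ is odd, and the handoff from one boundary case to another). Since the paper elides this entirely, your proposal is strictly more rigorous than the published argument, but to be complete it would still need that case analysis carried out rather than anticipated.
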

\begin{proof}
The proof trivially follows from the algorithm.$\hfill \qed$

\end{proof}

\begin{lemma}
\label{lemma::bc1}
Let $T'$ be the $(C,i)$-block tree which is either boundary case 1 or boundary case 2. Then, the output graph $H$ is $(i+1)$-connected.
\end{lemma}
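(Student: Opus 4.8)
The plan is to prove the claim by the standard two-part characterization of $(i+1)$-connectivity: first that the augmented graph $H$ has minimum degree at least $i+1$, and second --- the substantive part --- that $H$ admits no vertex separator of size $i$. Throughout I would work under the invariant that the underlying graph of the $(C,i)$-block tree (the graph obtained after the augmentation that already made $T$ $i$-connected) is $i$-connected; this is the natural induction hypothesis maintained by the outer loop $j=2,\ldots,r-1$ of Algorithm \ref{alg:nonpath}, with the base case $i=2$ supplied by the cycle created from the leaf edges in \emph{Step 3}.

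First I would dispose of the degree condition. By definition the $\alpha$ vertices of the $(C,i)$-block tree are exactly the vertices of degree $i$ in the current graph, while every other vertex already has degree at least $i+1$. In Boundary Case 1 there is a single such vertex $a$, and the augmented edge $\{a,c\}$ raises its degree to $i+1$; in Boundary Case 2 the four degree-$i$ vertices $a,b,c,d$ each gain one incident augmented edge among $\{a,c\},\{b,d\}$ together with the pre-existing $\{b,c\}$. Hence in both cases $\delta(H)\geq i+1$, which is necessary for $(i+1)$-connectivity but not yet sufficient.

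Second, the core step, I would argue by contradiction on a separator. Suppose $S\subseteq V(H)$ with $|S|=i$ separates $H$. Let $H_0$ be the pre-augmentation graph, so $H$ is obtained from $H_0$ by adding the Boundary-Case edges; since $H_0$ is a subgraph of $H$, a separator of $H$ is also a separator of $H_0$, and because $H_0$ is $i$-connected, $S$ is in fact a \emph{minimum} separator of $H_0$. The two endpoints of any augmented edge avoiding $S$ must lie in a common component of $H\setminus S$. It therefore suffices to exhibit, for every minimum separator $S$ of $H_0$, an augmented edge whose endpoints both avoid $S$ yet lie in \emph{distinct} components of $H_0\setminus S$: such an edge re-merges those components in $H$, contradicting that $S$ separates $H$.

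This reduces the lemma to a statement purely about how the augmented edges meet the minimum separators of $H_0$, and here is where the $(C,i)$-block-tree structure does the work and where I expect the main difficulty. The key fact I would establish is that every minimum ($i$-vertex) separator of $H_0$ corresponds to a $\sigma$ vertex of $T'$ --- a size-two separator sitting between two adjacent fundamental cycles, enlarged by $i-2$ further vertices in a controlled way --- so that the $\alpha$ vertices are forced onto opposite sides of such a separator. In Boundary Case 1, the partner $c$ is chosen with $deg_{T'}(c)=deg_{T'}(a)+1$, which I would show guarantees that $a$ and $c$ fall in different components of $H_0\setminus S$ for every candidate $S$, so that $\{a,c\}$ always reconnects $H\setminus S$. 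In Boundary Case 2 the graph has bounded size (four fundamental cycles, four $\alpha$ vertices with $\{b,c\}$ present), so after this structural reduction I would simply enumerate the finitely many size-$i$ separators and verify that one of $\{a,c\},\{b,d\}$ always bridges two components. The delicate point, and the main obstacle, is proving the correspondence between the minimum separators of $H_0$ and the $\sigma$ vertices of the block tree, i.e.\ ruling out an unexpected $i$-separator that avoids all $\alpha$ vertices; carrying this out rigorously, rather than reading it off the figures, is where the real work lies.
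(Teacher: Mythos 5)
Your overall framework is sound and, for what it is worth, already more explicit than the paper's own proof, which consists of the single sentence that ``it is easy to see that there is no separator of size $i$ in $H$ as per the strategy followed for augmentation.'' The reduction you set up is the right one: since the pre-augmentation graph $H_0$ is $i$-connected, any $i$-separator $S$ of $H$ is a minimum separator of $H_0$, and it suffices to show that some augmented edge has its endpoints in distinct components of $H_0\setminus S$. The degree bookkeeping is also correct.

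However, there is a genuine gap, and you have named it yourself: the claim that every minimum $i$-separator of $H_0$ is crossed by one of the augmented edges is precisely the content of the lemma, and neither you nor the paper proves it. In Boundary Case 1 this is not a formality. Only $a$ has degree $i$ in $H_0$, but a minimum separator of an $i$-connected graph need not be the neighbourhood of a minimum-degree vertex; an $i$-separator could, a priori, sit entirely inside the block-tree structure (for instance, an extension of a $\sigma$-type $2$-separator by $i-2$ further vertices) without separating $a$ from the particular vertex $c$ chosen by the rule $deg_{T'}(c)=deg_{T'}(a)+1$. If such a separator exists, the single edge $\{a,c\}$ does not destroy it and $H$ is not $(i+1)$-connected. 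To close the argument you would need a characterization of all minimum separators of the graph underlying a $(C,i)$-block tree in the Boundary Case 1 and 2 configurations --- e.g., that every $i$-separator consists of one vertex from each of $i$ internally disjoint $a$--$c$ paths, so that $a$ and $c$ necessarily land in different components --- and then verify that the chosen edge(s) cross all of them. Your plan to ``enumerate the finitely many size-$i$ separators'' in Boundary Case 2 also does not quite work as stated, because $i$ is not bounded and the fundamental cycles can be arbitrarily long, so the enumeration must again go through a structural description of the separators rather than a finite case check. Until that correspondence is established, the proposal is a correct proof outline with its central step missing.
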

\begin{proof}
 It is easy to see that there is no separator of size $i$ in $H$ as per the strategy followed for augmentation.$\hfill \qed$
\end{proof}

\begin{lemma}
\label{lemma::bc3}
Let $T'$ be the $(C,2)$-block tree satisfying boundary case 3. Then, the output graph $H$ is $3$-connected.
\end{lemma}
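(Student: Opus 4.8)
The plan is to prove $\kappa(H)=3$ by establishing two things separately: that $\delta(H)\ge 3$ (which gives $\kappa(H)\le 3$), and that $H$ admits no vertex separator of size $2$ (which gives $\kappa(H)\ge 3$). Let $H_0$ denote the graph obtained after the leaf-path augmentation of \emph{Step 3}, i.e. the graph underlying the $(C,2)$-block tree before this iteration's cross-cycle edges are added, so that $H$ is $H_0$ together with the two augmented edges. The key reduction I would use is monotonicity: since $H\supseteq H_0$ on the same vertex set, if a $2$-set $S$ fails to separate $H_0$ then $H\setminus S\supseteq H_0\setminus S$ is connected, so every $2$-separator of $H$ is already a $2$-separator of $H_0$. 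Hence it suffices to enumerate the $2$-separators of $H_0$ and show each is destroyed by the augmentation. I would also invoke that $H_0$ is $2$-connected (the leaf-path edges place every tree edge on a fundamental cycle), which rules out any $1$-separator of $H$ at once.

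Next I would analyse the boundary-case-3 structure. Here the three fundamental cycles $\pi_1,\pi_2,\pi_3$ sit linearly with $\pi_2$ in the middle, $\pi_2$ meeting $\pi_1$ in a size-$2$ separator $S_{12}$ and $\pi_3$ in a size-$2$ separator $S_{23}$ (the two $\sigma$ vertices of $T'$). By the degree sequence $(1,2,1)$, label $a$ as the single $\alpha$ vertex of $\pi_1$, $b,c$ the two $\alpha$ vertices of $\pi_2$, and $d$ the single $\alpha$ vertex of $\pi_3$. The crucial observation is that $\alpha$ vertices are by definition degree-$2$ vertices, whereas separator vertices are $\sigma$ vertices of higher degree; consequently none of $a,b,c,d$ lies in $S_{12}$ or $S_{23}$. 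I would then argue that $S_{12}$ and $S_{23}$ are the only $2$-separators of $H_0$, so that by the reduction above these are the only pairs that need to be checked.

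Then I would kill both separators. Deleting $S_{12}$ from $H_0$ severs the portion of $\pi_1$ containing $a$ from the rest; but the augmented edge $\{a,c\}$, with $c\in V(\pi_2)\setminus S_{12}$, reconnects this portion, so $H\setminus S_{12}$ is connected. Symmetrically, $\{b,d\}$ with $b\in V(\pi_2)\setminus S_{23}$ reconnects the $\pi_3$-side across $S_{23}$. Thus no $2$-separator of $H$ survives, and with the absence of $1$-separators we get $\kappa(H)\ge 3$. The degenerate branch, where $\{a,c\}$ or $\{b,d\}$ is already present, is handled by the symmetric pair $\{a,b\},\{c,d\}$: here $\{a,b\}$ bridges $S_{12}$ (as $b\in V(\pi_2)\setminus S_{12}$) and $\{c,d\}$ bridges $S_{23}$ (as $c\in V(\pi_2)\setminus S_{23}$), so the identical argument applies. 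Finally, since the four augmented endpoints are exactly the degree-$2$ vertices $a,b,c,d$, each is raised to degree $3$ while all other vertices already have degree at least $3$ in $H_0$; hence $\delta(H)=3$ and $\kappa(H)\le 3$, giving $\kappa(H)=3$. This mirrors the reasoning behind \emph{Lemma \ref{lemma::bc1}}.

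The main obstacle I anticipate is the middle step: rigorously certifying that $S_{12}$ and $S_{23}$ are the \emph{only} $2$-separators of $H_0$, i.e. ruling out mixed pairs (one separator vertex together with an $\alpha$ vertex) and pairs lying inside a single cycle. I would discharge this by using the $2$-connectivity of each fundamental cycle and the block structure encoded by $T'$, but the bookkeeping of which vertex lies on which side of $S_{12}$ and $S_{23}$ — and verifying this simultaneously for the primary and degenerate branches — is where the care is needed.
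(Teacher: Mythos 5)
Your route is genuinely different from the paper's: the paper proves this lemma by induction on the number of $\pi$ vertices (removing an augmented pair $\{a,b\}$ and appealing to the hypothesis on the smaller tree), whereas you argue directly on the fixed $(1,2,1)$ structure by enumerating the $2$-separators of the pre-augmentation graph $H_0$ and showing each is bridged. Your monotonicity reduction (every $2$-separator of $H$ is a $2$-separator of $H_0$) and the observation that $H_0$ is $2$-connected are both sound, and in that respect your skeleton is cleaner than the paper's.

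However, there is a genuine gap at exactly the step you flag, and it is not just a matter of bookkeeping: the claim that $S_{12}$ and $S_{23}$ are the \emph{only} $2$-separators of $H_0$ is false. Every $\alpha$ vertex is by definition a degree-$2$ vertex of $H_0$, so its two neighbours already form a $2$-separator isolating it; in boundary case 3 this immediately produces (up to) four additional $2$-separators $N_{H_0}(a)$, $N_{H_0}(b)$, $N_{H_0}(c)$, $N_{H_0}(d)$, generally distinct from $S_{12}$ and $S_{23}$, and longer runs of consecutive degree-$2$ vertices or mixed pairs (one $\sigma$ vertex together with a cycle vertex) can produce still more. Your concluding argument that $\{a,c\}$ and $\{b,d\}$ bridge $S_{12}$ and $S_{23}$ does nothing for these other cuts, so the proof as outlined does not establish $\kappa(H)\geq 3$. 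The repair is to replace the two-separator enumeration with a complete one --- e.g., show that every $2$-cut of $H_0$ has all the vertices of some $\alpha$ vertex's component on one side, and that each augmented edge gives that component a neighbour on the other side --- but that is precisely the content you have deferred, and until it is supplied the argument is incomplete. (Your degree count $\delta(H)=3$, hence $\kappa(H)\leq 3$, is fine.)
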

\begin{proof}
We shall prove this lemma by mathematical induction on $s$, number of $\pi$ vertices.\\
\textit{Base Case:} $s=2$, say $\pi_1$ and $\pi_2$. Let $a$ and $b$ be the $\alpha$ vertices of $\pi_1$ and $\pi_2$, respectively. By augmenting the edge $\{a,b\}$, the graph $H$ has no 2-size separator. Hence, it is 3-connected.\\
\textit{Hypothesis:} Assume that the lemma is true for $s \geq 2$.\\
\textit{Induction Step:} 
Let $\pi_1, \ldots, \pi_s$ be the $\pi$ vertices of $T'$. It is given that $deg(\pi_i)=1$, $1 \leq i \leq s$. Our algorithm augments an edge between the $\alpha$ vertex of $\pi_1$, say $a$, and the $\alpha$ vertex of $\pi_k$, say $b$, where $b$ is the largest non-adjacent $\alpha$ vertex of $a$. Let $T''$ be the tree $T'\backslash \{a,b\}$. i.e., the number of $\pi$ vertices in $T''$ is $s-2$. By the hypothesis, our approach yields a 3-connected graph of $T''$. Now, the only separators of size two in $T'$ are the neighbors of $a$ or $b$, which would be removed by the augmented edge   $\{a,b\}$. Thus, $H$ is 3-connected.$\hfill \qed$
\end{proof}

\begin{lemma}
\label{lemma::bc3itoi1}
Let $T'$ be the $(C,i)$-block tree satisfying boundary case 3. Then, $H$ is $(i+1)$-connected.
\end{lemma}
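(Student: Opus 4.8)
The plan is to prove this as the level-$i$ generalization of \emph{Lemma \ref{lemma::bc3}}, which already settles the case $i=2$. I would run an induction on $i$, taking \emph{Lemma \ref{lemma::bc3}} as the base case, and for the inductive step mirror the inner induction on $s$, the number of $\pi$ vertices, that was used there. The overarching observation I would record first is that by the time Algorithm \ref{alg:nonpath} reaches the $(C,i)$-block tree (iteration $j=i$), the graph produced so far, call it $G_i$, is already $i$-connected, since iteration $j=i-1$ made it $i$-connected. Consequently the $\alpha$ vertices of the $(C,i)$-block tree are exactly the vertices of degree $i$ in $G_i$, and to obtain an $(i+1)$-connected $H$ it suffices to raise every such degree-$i$ vertex to degree at least $i+1$ while verifying that no separator of size $i$ survives the augmentation.

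Next I would characterize the size-$i$ separators of $G_i$ using the $(C,i)$-block tree. Because $G_i$ is $i$-connected, any separator $S$ with $|S|=i$ either isolates a single degree-$i$ vertex (an $\alpha$ vertex), or it is localized between two adjacent fundamental cycles and, by the $(C,i)$-block-tree structure, decomposes as $i-2$ forced internal vertices together with a $\sigma$-type separator of size $2$ separating those adjacent cycles. The degree-sequence hypothesis $(1,2,1)$ of boundary case 3 forces the block tree to be short, so only a bounded number of separator shapes can occur, and I would enumerate them explicitly.

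With that in hand, the inductive step proceeds exactly as in \emph{Lemma \ref{lemma::bc3}}: augment the prescribed edges (normally $\{a,c\}$ and $\{b,d\}$, and in the degenerate subcase where $\{a,c\}$ or $\{b,d\}$ is already present, the alternative $\{a,b\}$ and $\{c,d\}$), delete the two endpoints joined by one augmented edge to obtain a $(C,i)$-block tree with two fewer $\pi$ vertices, and invoke the induction hypothesis on that smaller instance to conclude it is $(i+1)$-connected. The only separators of size $i$ present in $T'$ but absent from the smaller instance are those passing through the neighborhoods of the removed $\alpha$ vertices, and each such separator is destroyed by the very edge augmented across it, since that edge supplies a path avoiding the separator.

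The main obstacle I anticipate is precisely the separator characterization for general $i$: at level $i=2$ the size-$2$ separators are simply pairs of neighbors of the unique $\alpha$ vertex, which is transparent, whereas for $i\geq 3$ one must argue that every size-$i$ cut of $G_i$ really does decompose into the forced internal part plus a localized $\sigma$-type pair, and then check that both augmentation patterns of boundary case 3 meet all of them, including the degenerate subcase handled by $\{a,b\}$ and $\{c,d\}$. Establishing this decomposition rigorously from the $(C,i)$-block-tree definition, rather than reading it off the figures, is where the real work lies.
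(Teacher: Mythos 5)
Your proposal follows essentially the same route as the paper's own proof: an induction anchored at Lemma~\ref{lemma::bc3}, whose step augments the prescribed edge between two $\alpha$ vertices, deletes those two vertices to obtain a smaller $(C,i)$-block tree on which the hypothesis is invoked, and then argues that every surviving size-$i$ separator lies on the paths between the joined vertices and is destroyed by the augmented edge. The ``main obstacle'' you flag --- rigorously characterizing the size-$i$ separators of the level-$i$ graph rather than reading them off the structure --- is precisely the step the paper also asserts in one sentence without proof, so your attempt is no less complete than the published argument.
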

\begin{proof}
Let us prove this lemma by mathematical induction on $i$.\\
\textit{Base Case:} The lemma is true for $i=2$ by the \emph{Lemma \ref{lemma::bc3}}.\\
\textit{Hypothesis:} Assume that the lemma is true for $i \geq 2$.\\
\textit{Induction Step:} 
Let $\pi_1, \ldots, \pi_s$ be the $\pi$ vertices of $T'$. It is given that $deg(\pi_i)=1$, $1 \leq i \leq s$. Our algorithm augments an edge between the $\alpha$ vertex of $\pi_1$, say $a$, and the $\alpha$ vertex of $\pi_k$, say $b$, where $b$ is the largest non-adjacent $\alpha$ vertex of $a$. Let $T''$ be the tree $T'\backslash \{a,b\}$. i.e., the number of $\pi$ vertices in $T'$ is $s-2$. By the hypothesis, graph obtained from our approach is $(i+1)$-connected. Now, the only separators of size $i$ in $T$ are the set of vertices where each vertex lies in the distinct path between $a$ and $b$. We remove all such separators by augmenting an edge $\{a,b\}$. Thus, $H$ is $(i+1)$-connected.$\hfill \qed$
\end{proof}

\begin{lemma}
\label{lemma::bc4itoi1}
Let $T$ be the $(C,i)$-block tree satisfying boundary case 4. The graph $H$ obtained from our algorithm is $(i+1)$-connected.
\end{lemma}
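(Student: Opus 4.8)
My plan is to prove this by induction, closely following the argument of \emph{Lemma~\ref{lemma::bc3itoi1}}: boundary case~4 is exactly the generalisation of boundary case~3 in which \emph{every} $\pi$ vertex of the $(C,i)$-block tree $T'$ has degree one, not merely the two end vertices of a $(1,2,1)$ chain. Accordingly I would run the induction on $s$, the number of $\pi$ vertices, for a fixed level $i$, and then lift the conclusion from level $i$ to level $i+1$ by the same structural observation used in the earlier lemmas: once the underlying graph has been made $i$-connected, the $(C,i)$-block tree correctly records all of its degree-$i$ vertices as $\alpha$ vertices. The base of the outer induction, $i=2$, is handled exactly as in \emph{Lemma~\ref{lemma::bc3}}, where the candidate size-$2$ separators are the $\sigma$ vertices joining adjacent fundamental cycles.

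For the inductive step on $s$, recall that in boundary case~4 the algorithm augments the edge $\{a,b\}$, where $a$ is the unique $\alpha$ vertex of $\pi_1$ and $b$ is the unique $\alpha$ vertex of $\pi_s$, this being the pair of $\pi$ vertices farthest apart in the ordering $\pi_1,\ldots,\pi_s$. Deleting $a$ and $b$ leaves $\pi_1$ and $\pi_s$ with no $\alpha$ vertex, so they drop out and we obtain a $(C,i)$-block tree $T''=T'\setminus\{a,b\}$ with $s-2$ $\pi$ vertices, each still of degree one, i.e.\ again in boundary case~4. By the induction hypothesis the graph obtained from $T''$ is $(i+1)$-connected. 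It then suffices to argue that augmenting $\{a,b\}$ removes every size-$i$ separator of $T'$ not already accounted for in $T''$. As in \emph{Lemma~\ref{lemma::bc3itoi1}}, such a separator can only consist of vertices lying on the distinct internal paths joining $a$ to $b$ in $H$, and the new chord $\{a,b\}$ supplies a path bypassing each of them; hence none of them remains a cut of the augmented graph, and $H$ is $(i+1)$-connected.

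When the maximum-distance matching can no longer be continued, the residual block tree falls into boundary case~1 (a single leftover $\alpha$ vertex, occurring when $s$ is odd) or boundary case~2 (a small residual configuration with four $\pi$ vertices); in either case I would invoke \emph{Lemma~\ref{lemma::bc1}} to conclude that the graph produced so far has no size-$i$ separator and is thus $(i+1)$-connected. This is also what terminates the inner induction cleanly, and it is exactly the reduction promised in the statement of boundary case~4.

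The part I expect to be the main obstacle is justifying that pairing $\pi$ vertices at \emph{maximum} distance in the ordering eliminates \emph{every} surviving size-$i$ separator. This is the block-tree analogue of the property responsible for the optimal connectivity of Harary graphs: I would show that each minimal size-$i$ separator corresponds to a cut splitting $T'$ into two nonempty parts, and that matching $\pi_1$ with $\pi_s$, $\pi_2$ with $\pi_{s-1}$, and so on, necessarily places an augmented chord straddling every such cut. The delicate sub-cases are those in which a cut isolates the middle, unmatched $\pi$ vertex, precisely the configurations deferred to boundary cases~1 and~2 above, so I would need to verify that these cases are exhaustive and that the chords jointly cover all cuts of $T'$.
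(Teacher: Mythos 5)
Your proposal reaches the right conclusion but by a genuinely different route from the paper. The paper's proof of Lemma~\ref{lemma::bc4itoi1} is a direct, two-sentence structural argument: it asserts that the augmentation turns each $\pi$ vertex (i.e.\ each fundamental cycle) into a Harary graph of connectivity $i+1$ and that, because all possible edges exist between any two $\pi$ vertices, the whole graph $H$ is $(i+1)$-connected; no induction is used. You instead transplant the inductive scheme of Lemmas~\ref{lemma::bc3} and~\ref{lemma::bc3itoi1}: induct on the number $s$ of $\pi$ vertices, pair the two farthest $\pi$ vertices, delete their $\alpha$ vertices to obtain a smaller boundary-case-4 instance, and argue that the new chord $\{a,b\}$ destroys every size-$i$ separator not already handled, with the leftover configurations deferred to boundary cases~1 and~2 via Lemma~\ref{lemma::bc1}. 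Your route has the advantage of treating boundary case~4 uniformly with boundary case~3 and of making explicit the reduction to boundary cases~1 and~2 that the statement of boundary case~4 promises but the paper's proof never mentions; the paper's route, if one accepts its Harary-graph claim, is shorter and avoids the delicate covering argument altogether. The one substantive gap in your version is the step you yourself flag: the claim that the maximum-distance matching places a chord across every cut inducing a size-$i$ separator is asserted, not proved, and the induction hypothesis is applied to $T''$ as if deleting two $\alpha$ vertices from the block tree yields a genuinely smaller instance of the same augmentation problem on the underlying graph. This is exactly the level of detail the paper itself leaves implicit in Lemmas~\ref{lemma::bc3} and~\ref{lemma::bc3itoi1}, so your argument is no less rigorous than the paper's, but to make it airtight you would need to characterize the size-$i$ separators of the partially augmented graph and verify that each one separates $\{a,b\}$ for some matched pair.
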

\begin{proof}
Let $T'$ be the $(C,i)$-block tree satisfying boundary case 4. By our algorithm each $\pi$ vertex is converted into a Harary graph, with connectivity $i+1$. Since all possible edges exist between any two $\pi$ vertices, the resulting graph $H$ is $(i+1)$-connected. $\hfill \qed$
\end{proof}

\begin{lemma}
\label{lemma::bc5itoi1}
Let $T'$ be the $(C,i)$-block tree satisfying boundary case 5. Then $H$  is $(i+1)$-connected.
\end{lemma}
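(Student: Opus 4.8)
The plan is to prove the statement by induction on the number of $\alpha$ vertices in $T'$ (equivalently, on the number of residual degree-$i$ vertices), in the same inductive style as \emph{Lemma \ref{lemma::bc3}} and \emph{Lemma \ref{lemma::bc3itoi1}}. A single boundary-case-5 step adds an edge $\{x,y\}$ between two non-adjacent $\alpha$ vertices of the \emph{same} $\pi$ vertex and then deletes $x$ and $y$; since deleting vertices preserves the complete cross-$\pi$ adjacency among the surviving $\alpha$ vertices, the reduced instance $T'\backslash\{x,y\}$ is again a boundary-case-5 instance with two fewer $\alpha$ vertices, and the recursion terminates at boundary case 1 (a single $\alpha$ vertex). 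For that base case I would invoke \emph{Lemma \ref{lemma::bc1}}, which already guarantees that a boundary-case-1 $(C,i)$-block tree yields an $(i+1)$-connected graph.

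Before the separator analysis I would record the necessary degree condition. In boundary case 5 every $\alpha$ vertex is, by hypothesis, already adjacent to all $\alpha$ vertices in the other $\pi$ vertices, so the only degree still missing is supplied by the intra-$\pi$ augmentation; once these within-$\pi$ edges are added every degree-$i$ vertex is lifted to degree $i+1$, giving $\delta(H) \geq i+1$. This part parallels \emph{Lemma \ref{exactrnonpath}} and is essentially bookkeeping on degrees, so I would state it briefly and move on.

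The core of the argument is to rule out a vertex separator of size $i$ in $H$. Assuming $T'\backslash\{x,y\}$ yields an $(i+1)$-connected graph by the inductive hypothesis, I would show that reinserting $x$, $y$ together with the augmented edge $\{x,y\}$ and their already-present cross-$\pi$ edges cannot create a size-$i$ separator. Suppose $S$, with $|S|=i$, were such a separator. Because the complete cross-$\pi$ adjacency of boundary case 5 makes the $\alpha$ vertices of distinct $\pi$ vertices mutually reachable without routing through any single size-$i$ set, $S$ would have to separate two vertices lying inside one $\pi$ block; but the edge $\{x,y\}$ was chosen precisely between two non-adjacent $\alpha$ vertices of a single $\pi$ vertex, so it short-circuits exactly the internal cut that such a separator would rely on, while $x$ and $y$ each retain neighbours outside $S$ through their cross-$\pi$ edges. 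This contradiction, combined with $\delta(H)\geq i+1$ and the existence of a degree-$(i+1)$ vertex, yields $\kappa(H)=i+1$.

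The step I expect to be the real obstacle is the separator analysis \emph{internal to a single $\pi$ block}: I must verify that the non-adjacent pairs used for the intra-$\pi$ augmentation collectively destroy \emph{every} size-$i$ cut inside a fundamental cycle, not merely the one associated with a particular chosen pair. Since boundary case 5 already supplies the complete links across distinct $\pi$ vertices, all danger is concentrated in how each cycle is internally re-wired, and making this rigorous will likely require the same Harary-type reasoning as in \emph{Definition \ref{defn:harary}} and \emph{Lemma \ref{lemma:hararygraphs}}, confirming that the within-$\pi$ augmentation leaves each $\pi$ block with no surviving internal separator of size $i$.
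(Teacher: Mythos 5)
Your proposal differs structurally from the paper's proof, and the difference matters: the paper inducts on $s$, the number of $\pi$ vertices, peeling off an entire $\pi$ block (together with all its $\alpha$ vertices) at each step, so that the base case $s=1$ is a single fundamental cycle whose intra-block augmentation produces exactly a Harary graph $H_{i+1,n}$; the $(i+1)$-connectivity of that base case is then discharged in one stroke by \emph{Lemma \ref{connectivityhararygraphs}}. You instead induct on the number of $\alpha$ vertices, removing one augmented pair $\{x,y\}$ at a time and terminating at boundary case 1. The step you yourself flag as ``the real obstacle'' --- verifying that the intra-$\pi$ edges collectively destroy every size-$i$ cut inside a fundamental cycle, not merely the cut associated with one chosen pair --- is precisely the step your decomposition cannot discharge: a single edge $\{x,y\}$ between two non-adjacent $\alpha$ vertices of one block does not ``short-circuit exactly the internal cut,'' because the internal cuts of a cycle-based block are only eliminated by the \emph{cumulative} Harary pattern of all the within-block edges. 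Since no lemma in the paper establishes the connectivity of a partially augmented block, and your base case (boundary case 1) carries no Harary structure, the inductive step is left resting on an unproved claim. This is a genuine gap, not a presentational one.

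The fix is essentially to adopt the paper's choice of induction variable: group the intra-block augmentations by $\pi$ vertex, observe that the edges added inside a single block realize $H_{i+1,n}$ on that block (this is where \emph{Definition \ref{defn:harary}} and \emph{Lemma \ref{connectivityhararygraphs}} enter), and then argue that the complete cross-$\pi$ adjacency of boundary case 5 prevents any size-$i$ separator from isolating vertices in distinct blocks. Your degree bookkeeping ($\delta(H)\geq i+1$) and your observation that deletion preserves the cross-$\pi$ completeness are both correct and reusable; it is only the within-block separator analysis that needs to be rerouted through the Harary connectivity result rather than handled one edge at a time.
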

\begin{proof}
We shall prove this lemma by mathematical induction on $s$, the number of $\pi$ vertices.\\
\textit{Base Case:} $s=1$. Our approach yields a Harary graph $H_{i+1,n}$ when $s=1$. By \emph{Lemma \ref{connectivityhararygraphs}}, $H$ is $(i+1)$-connected.\\
\textit{Hypothesis:} Assume that the lemma is true for $s \geq 2$.\\
\textit{Induction Step:}
Let $\pi_1, \ldots, \pi_s$ be the $\pi$ vertices of $T'$. Let $T''$ be the tree $T'\backslash \{\pi_s \cup \{\alpha \text{ vertices of }\pi_s\}\}$. By the hypothesis, $H$ is $(i+1)$-connected. Now, the only non-adjacent vertices with separator of size $i$ in $T'$ are the $\alpha$ vertices of $\pi_s$. We augment edges between them and this removes all separators of size $i$. Hence, the lemma. $\hfill \qed$
\end{proof}

\begin{lemma}
\label{nonpathr}
For a non-path tree $T$, the graph $H$ obtained from the algorithm $\mathtt{Non}$-$\mathtt{Path~~Augmentation()}$ is $r$-connected.
\end{lemma}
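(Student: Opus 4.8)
The plan is to prove the statement by induction on the index $i$ of the block tree processed in the outer \textbf{for} loop (Step 4), maintaining the invariant that after the iteration handling the $(C,i)$-block tree the current graph is $(i+1)$-connected. Since the loop runs over $i = 2, \ldots, r-1$ and edge augmentation never decreases connectivity, the final graph is $r$-connected, which is the claim. For the start of the induction I would first check that after Step 3 the leaf chain $\{x_i,x_{i+1}\}$ creates the fundamental cycles of the $(C,2)$-block tree and makes $T$ $2$-connected, so that the $i=2$ stage indeed receives a $2$-connected graph as input and, in general, the level-$i$ stage receives the $i$-connected output of the level-$(i-1)$ stage.

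The inductive step then reduces to a single assertion: processing the $(C,i)$-block tree turns an $i$-connected graph into an $(i+1)$-connected one. I would split this into the greedy \textbf{while} phase (Steps 6--11) and the terminating boundary case (Steps 12--14). The boundary-case half is already available off the shelf: Lemma \ref{lemma::bc1} covers boundary cases 1 and 2, Lemma \ref{lemma::bc3itoi1} (with Lemma \ref{lemma::bc3} as its base) covers boundary case 3, Lemma \ref{lemma::bc4itoi1} covers boundary case 4 (via the Harary-graph connectivity of Lemma \ref{connectivityhararygraphs}), and Lemma \ref{lemma::bc5itoi1} covers boundary case 5. Hence, once the block tree reaches any of the enumerated boundary cases, the resulting graph is $(i+1)$-connected.

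For the greedy phase I would run an inner induction on the number $s$ of $\pi$-vertices, equivalently on the number of \textbf{while}-iterations. The structural fact to isolate is that, at connectivity level $i$, every minimum (size-$i$) separator of the current graph is encoded by a $\sigma$-type vertex lying between two adjacent fundamental cycles, i.e.\ between two $\pi$-blocks. Augmenting the edge $\{x,y\}$ between an $\alpha$-vertex of $\pi_{max}$ and an $\alpha$-vertex of $\pi_{smax}$ destroys the separator encoded by the $\sigma$-vertex between those blocks, and deleting $x,y$ yields a strictly smaller $(C,i)$-block tree to which the inner hypothesis applies. Combined with the monotonicity remark that adding edges cannot create a separator of size smaller than $i$, eliminating all the size-$i$ separators this way gives connectivity at least $i+1$. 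The greedy choice of the two highest-degree $\pi$-vertices is precisely what keeps the block tree balanced so that it contracts to one of the boundary cases rather than stranding an unremovable separator.

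The hard part will be the two global claims that the per-step local reasoning does not by itself deliver: first, that the six enumerated boundary cases are \emph{exhaustive}, so the \textbf{while} loop is guaranteed to halt in one of them; and second, that the greedy balancing never leaves a residual size-$i$ separator uncovered at the moment it exits the loop. I expect both to require a careful case analysis of how the $\sigma$- and $\alpha$-vertices can be distributed as the block tree shrinks. Finally, I would combine the connectivity conclusion with Lemma \ref{exactrnonpath}, which guarantees $\delta(H)\geq r$ (a necessary condition, in the spirit of Lemma \ref{lb}), to confirm that $H$ is genuinely $r$-connected rather than merely accidentally free of the separators tracked at each level.
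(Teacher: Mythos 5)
Your proposal follows essentially the same route as the paper's proof: an induction on the shrinking block tree (the paper inducts on the number of $\alpha$ vertices where you use an outer induction on the level $i$ with an inner induction on the number of $\pi$ vertices, which amounts to the same shrinking argument), with the boundary cases discharged by Lemmas \ref{lemma::bc1}--\ref{lemma::bc5itoi1} and the inductive step resting on the claim that the augmented edge $\{x,y\}$ between $\pi_{max}$ and $\pi_{smax}$ eliminates the remaining size-$i$ separators. The two ``hard parts'' you flag --- exhaustiveness of the boundary cases and the absence of residual separators on exiting the \textbf{while} loop --- are exactly the points the paper's own proof asserts without further justification, so your write-up is, if anything, more explicit about where the real work lies.
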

\begin{proof}
Let us prove this lemma by mathematical induction on $k$, the number of $\alpha$ vertices in $T'$. If $T'$ satisfies any of the base cases, then from the lemmas presented above, it is clear that $H$ is $r$-connected. Assume $T'$ is a non-trivial block tree. By our algorithm, we choose an $\alpha$ vertex $x$ from $\pi_{max}$ and an $\alpha$ vertex $y$ from $\pi_{smax}$ such that $\{x,y\} \notin E_{ca}$. We augment an edge $\{x,y\}$, remove $x$ and $y$ from $T'$ to get a tree $T''$. By the hypothesis, $T''$ has fewer $\alpha$ vertices than $T'$ and hence our approach guarantees a $r$-connected graph $H'$. Clearly, the edge $\{x,y\}$ when introduced to $H'$ removes vertex separators and thus, $H$ is $r$-connected. $\hfill \qed$
\end{proof}

\begin{theorem}
\label{finalr}
For a tree $T$, the graph $H$ obtained from Algorithm \ref{alg:tree} is $r$-connected. Further, $H$ is obtained from $T$ by augmenting a minimum set of edges.
\end{theorem}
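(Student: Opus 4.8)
The plan is to prove the theorem as a direct consequence of the lemmas already established, organising the argument around the two branches of Algorithm \ref{alg:tree}. There are two claims to settle: first, that $H$ is $r$-connected, and second, that the augmentation set $E_{ca}$ produced is of minimum cardinality. Since Algorithm \ref{alg:tree} dispatches to \texttt{Path Augmentation()} when $T$ is a path and to \texttt{Non-Path Augmentation()} otherwise, I would begin by splitting into these two cases and handling each with the matching results proved above.

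For the connectivity claim, in the path case I would invoke Lemma \ref{pathrconnected}, which already guarantees that the graph produced by \texttt{Path Augmentation()} is $r$-connected (via the identification with a Harary graph in Lemma \ref{lemma:hararygraphs} and the connectivity of Harary graphs in Lemma \ref{connectivityhararygraphs}). In the non-path case I would invoke Lemma \ref{nonpathr}, whose inductive argument over the $(C,j)$-block trees, together with the boundary-case lemmas, establishes that \texttt{Non-Path Augmentation()} outputs an $r$-connected graph. Thus in either branch $H$ is $r$-connected, settling the first half.

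For minimality I would argue by matching the exact edge count against the lower bound. Lemma \ref{lb} shows that every $r$-connectivity augmentation set of $T$ has at least $\lceil \frac{1}{2} \sum_{i=1}^{r-1}(r-i)\,l_i \rceil$ edges. Lemma \ref{lowerboundrpath} (path case) and Lemma \ref{lowerboundnon-path} (non-path case) show that the algorithm augments \emph{exactly} $\lceil \frac{1}{2} \sum_{i=1}^{r-1}(r-i)\,l_i \rceil$ edges. Since the set $E_{ca}$ returned is, by the previous paragraph, a valid $r$-connectivity augmentation set whose cardinality equals the universal lower bound, no smaller valid set can exist, and hence $E_{ca}$ is minimum. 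Combining both halves yields the theorem.

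The routine part is the stitching; the only point that deserves care is confirming that the $l_i$ appearing in the exact-count lemmas refer to the degree distribution of the original tree $T$ and coincide with the $l_i$ used in the lower bound of Lemma \ref{lb}, so that the two quantities are literally the same expression rather than merely similar. Once this bookkeeping is checked---in particular that the per-iteration counts telescope correctly to the closed form in the non-path case---the equality of achieved count and lower bound is immediate and optimality follows with no further work. I do not anticipate a genuine obstacle here, since all the structural and counting effort has been discharged by the preceding lemmas; the theorem is essentially their assembly.
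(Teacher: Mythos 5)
Your proposal matches the paper's own proof essentially step for step: it splits on the path/non-path branch of Algorithm \ref{alg:tree}, cites Lemmas \ref{pathrconnected} and \ref{nonpathr} for $r$-connectivity, and pairs the exact edge counts of Lemmas \ref{lowerboundrpath} and \ref{lowerboundnon-path} against the lower bound of Lemma \ref{lb} to conclude minimality. The assembly is correct and takes the same route as the paper.
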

\begin{proof}
The lower bound for the $r$-connectivity augmentation of $T$ is $ \lceil\frac{1}{2} \sum\limits_{i=1}^{r-1} (r-i) \times l_{i} \rceil $, where $l_i$ denotes the number of vertices of degree $i$ in $T$ by \emph{Lemma \ref{lb}}. The algorithm calls the Algorithm \ref{alg:path} if $T$ is a path. The Algorithm \ref{alg:path} converts the path to a $r$-connected graph $H$ by augmenting exactly $ \lceil\frac{1}{2} \sum\limits_{i=1}^{r-1} (r-i) \times l_{i} \rceil $ edges (by \emph{Lemma \ref{lowerboundrpath}} and \emph{Lemma \ref{pathrconnected}}). The algorithm calls the Algorithm \ref{alg:nonpath} if $T$ is a non-path. The Algorithm \ref{alg:nonpath} converts the non-path tree $T$ to a $r$-connected graph $H$ by augmenting exactly $ \lceil\frac{1}{2} \sum\limits_{i=1}^{r-1} (r-i) \times l_{i} \rceil $ edges (by \emph{Lemma \ref{lowerboundnon-path}} and \emph{Lemma \ref{nonpathr}}). Thus, for a tree $T$ the graph obtained from Algorithm \ref{alg:tree} is $r$-connected. Further, $H$ is obtained by using a minimum connectivity augmentation set. Therefore, the claim follows. $\hfill \qed$
\end{proof}

%

\subsection{Trace of Algorithm \ref{alg:nonpath}}
We trace algorithm \ref{alg:nonpath} for the tree given in \emph{Figure \ref{fig:c2block}} and for $r=4$. The first step of augmentation among the leaves is explained in \emph{Figure \ref{fig:c2block}}. The rest of the iterations are as follows.

\begin{figure}[H]
\centering
\includegraphics[scale=0.27]{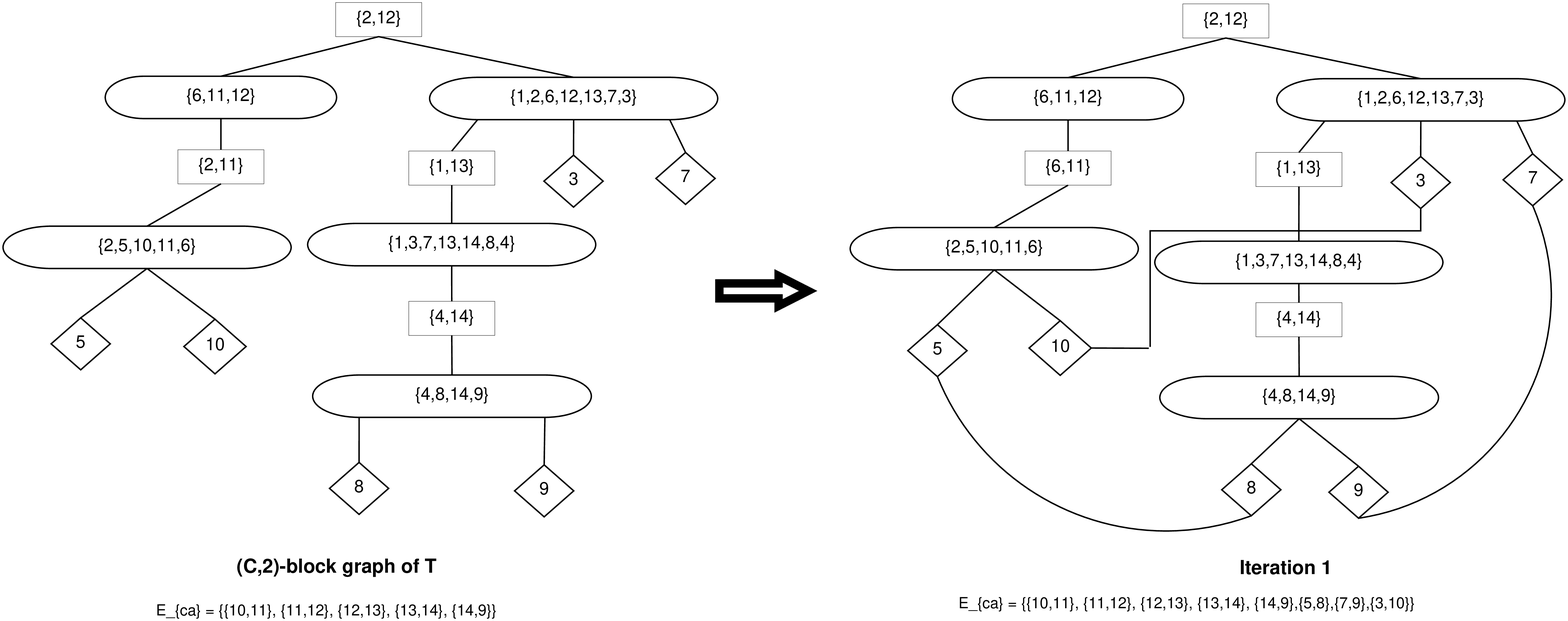}
\label{fig::tracenonpath1}
\end{figure}

\begin{figure}[H]
\centering
\includegraphics[scale=0.27]{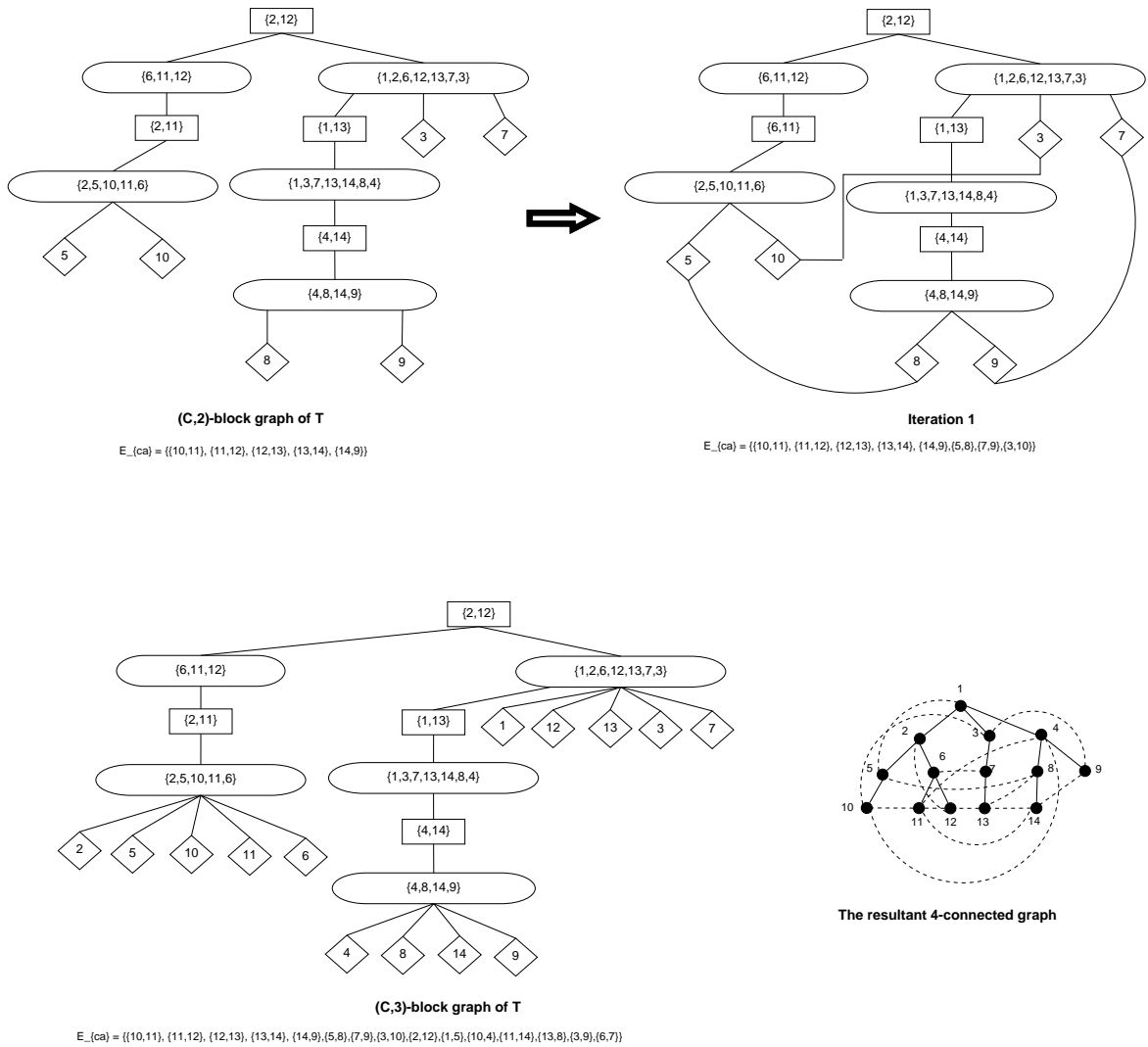}
\caption{An illustration for Algorithm \ref{alg:nonpath}}
\label{fig::tracenonpath2}
\end{figure}

\section{Conclusion and Future work}
In this paper, we have presented an algorithm for finding a minimum $r$-connectivity augmentation set in trees. We believe that this approach can be extended to 1-connected graphs. An extension of this work would be the $(k+r)$-connectivity augmentation of $k$-connected graphs, for any $k < r < n$.

\end{document}